\documentclass[conference]{IEEEtran}
\IEEEoverridecommandlockouts



\usepackage{graphicx}
\usepackage{booktabs}
\usepackage{subfigure}
\usepackage[skip=5pt]{caption}
\usepackage{multirow}
\usepackage{weiwAlgorithm}
\usepackage{amsmath}
\usepackage{newtxmath}
\usepackage{diagbox}
\usepackage{url}
\usepackage[colorlinks,citecolor=black,linkcolor=black,urlcolor=black]{hyperref}
\usepackage{dsfont}
\usepackage{enumitem}
\usepackage{caption}
\usepackage{balance}
\usepackage{tabularx}

\usepackage{ntheorem}

\sloppy
\textfloatsep 1mm plus 1mm \intextsep 1mm plus 1mm

\def\BibTeX{{\rm B\kern-.05em{\sc i\kern-.025em b}\kern-.08em
    T\kern-.1667em\lower.7ex\hbox{E}\kern-.125emX}}

\DeclareMathOperator*{\argmax}{arg\,max}

\newtheorem{example}{\textit{\textbf{Example}}}
\newtheorem{theorem}{{\textit{Theorem}}}

\newtheorem{lemma}{{\textit{Lemma}}}

\newtheorem{definition}{{\textit{Definition}}}
\newtheorem*{proof}{{\textit{Proof}}\textit{.}}

\newcommand{\myparagraph}[1]{\vspace{0.2mm} \noindent \textbf{#1}.\xspace}

\newcommand{\ie}{{i.e.,}\xspace}

\newcommand{\etal}{et al.\xspace}

\newcommand{\nonl}{\renewcommand{\nl}{\let\nl\oldnl}}

\newcommand\blfootnote[1]{%
  \begingroup
  \renewcommand\thefootnote{}\footnote{#1}%
  \addtocounter{footnote}{-1}%
  \endgroup
}

\begin{document}

\title{Enhance Stability of Network by Edge Anchor}

\author{
Hongbo Qiu$^{\dag}$, Renjie Sun$^{\dag *}$, Chen Chen$^{\S *}$, Xiaoyang Wang$^{\natural}$ \vspace{1mm}\\
\fontsize{9}{9}\selectfont\itshape
$^{\dag}$\textit{Zhejiang Gongshang University}, China\vspace{1mm}
$^{\S}$\textit{University of Wollongong}, Australia\vspace{1mm}
$^{\natural}$\textit{The University of New South Wales}, Australia\\
\fontsize{9}{9}\selectfont\ttfamily\upshape
hongboq.zjgsu@gmail.com~renjie.sun@stu.ecnu.edu.cn\\
chenc@uow.edu.au~xiaoyang.wang1@unsw.edu.au\\

}



\maketitle

\begin{abstract}

With the rapid growth of online social networks, strengthening their stability has emerged as a key research focus.
This study aims to identify influential relationships that significantly impact community stability.
In this paper, we introduce and explore the anchor trussness reinforcement problem to reinforce the overall user engagement of networks by anchoring some edges. 
Specifically, for a given graph $G$ and a budget $b$, we aim to identify $b$ edges whose anchoring maximizes the trussness gain, which is the cumulative increment of trussness across all edges in $G$. 
We establish the NP-hardness of the problem.
To address this problem, we introduce a greedy framework that iteratively selects the current best edge. 
To scale for larger networks, we first propose an upward-route method to constrain potential trussness increment edges. 
Augmented with a support check strategy, this approach enables the efficient computation of the trussness gain for anchoring one edge. 
Then, we design a classification tree structure to minimize redundant computations in each iteration by organizing edges based on their trussness.
We conduct extensive experiments on 8 real-world networks to validate the efficiency and effectiveness of the proposed model and methods.
\blfootnote{*Corresponding author}
\end{abstract}

\section{Introduction}
\label{sec:intro}

%
Graphs are a powerful and widely used tool for analyzing social networks, as they can represent relationships between different entities.
Recently, there has been increasing interest in understanding user engagement \cite{zhang2017olak,zhangfanefficiently2018,QingyuanGobal2020} and examining the stability and cohesiveness of social networks \cite{zhangfanQuantifyingNodeImportance,BuOnImprovingtheCohesiveness}. 
Empirical studies have demonstrated that user participation or departure significantly affects social networks, with these changes often being influenced by the behaviors of their connections \cite{Levsocial}.
For instance, when an active user leaves a network, it may trigger a cascade effect that reduces overall user engagement and weakens relationships within the network \cite{QingyuanGobal2020,Seki2017TheMO}. 
Key users and relationships play a critical role in fostering engagement, promoting information dissemination, and strengthening cooperation within networks \cite{wanginfluencemax,wang2025effective}.

In graph theory, a $k$-truss is a dense subgraph where every edge must be included in at least $k-2$ triangles (\ie support). 
Given a graph, the $k$-truss can be calculated by iteratively removing edges with support less than $k-2$.
Each edge has a trussness value, indicating the largest $k$ for which a $k$-truss containing the edge exists.
The $k$-truss has many properties, such as higher density, strong connectivity, and polynomial-time computations.
Thus it is widely utilized for discovering cohesive communities \cite{liuqingtrussbesed,zhaojunCommunity,VLDB-eff-community-search,huang2014querying,Esratruss2017,wu2020maximum,sun2022diversified}.

\begin{figure}
    \centering
    \includegraphics[width=0.7\linewidth]{ 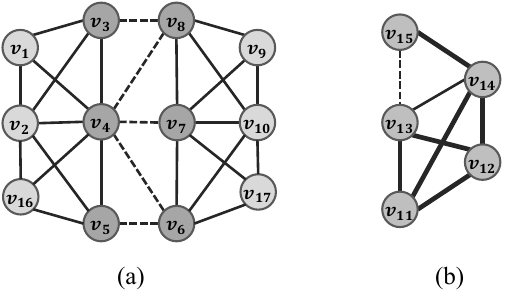}
    \caption{A toy example}
    \label{fig:te}
\end{figure}

\myparagraph{Motivation}
The size of a $k$-truss serves as a feasible indicator of network stability.
Thus Zhang \etal \cite{zhangfanefficiently2018} study the anchor $k$-truss (AKT) problem, which focuses on expanding a specific $k$-truss subgraph by selecting $b$ anchor vertices to maximize the size of the $k$-truss. In AKT problem, anchor vertices are assumed to always remain in the $k$-truss, regardless of their original connectivity. For example, in a social network, incentivizing key users to stay active can encourage continued engagement and support other users' participation \cite{bhawalkar2015preventing}.

However, the AKT problem is limited to local enhancements of the network. Specifically, it only expands the $k$-truss with a particular $k$ value. As described in \cite{zhangfanefficiently2018}, anchoring a vertex can only increase the trussness of edges with trussness equal to $k-1$, and by at most 1. Besides, the valid anchor vertices are restricted to endpoints of edges with trussness equal to $k-1$, meaning that other vertices cannot contribute to the expansion. This limitation prevents a more comprehensive enhancement of the network structure. Furthermore, the $k$-truss model inherently evaluates network stability based on edge strength \cite{chenLocating,DBLP:conf/ijcai/ZhuZCWZL19}, \ie the support of edges, rather than vertex persistence. However, the AKT problem focuses on anchoring vertices rather than strengthening edges, which contradicts the fundamental principles of the truss model. In numerous real-world applications, interactions or relationships (edges) play a more crucial role than individual entities (vertices) \cite{Song_2023,ouyangQuantifying,alexandre2024critical}. Network stability is often better preserved by maintaining high-quality connections, rather than simply ensuring the existence of certain vertices.

Based on the above analysis, we introduce and investigate the anchor trussness reinforcement (ATR) problem, which aims to select $b$ anchor edges to maximize the overall trussness gain across the entire graph. Compared to the AKT problem, our problem focuses on enhancing the trussness of all edges, thereby improving the global structural cohesion and robustness of the network rather than just expanding a local $k$-truss. In ATR problem, anchoring an edge means that it remains persistently in any truss structure and continuously provides support to all edges forming triangles with it, regardless of structural changes. Since the $k$-truss model is defined based on edge support (\ie the number of triangles an edge belongs to), we set the support of anchor edges to \textit{infinity} as a computational abstraction. This ensures that they always contribute to triangle formation under any conditions, reinforcing the stability of the entire network.

\myparagraph{Application}
The ATR problem has various real-world applications, some of which are outlined below.

\begin{itemize}[leftmargin=1em]
\item \textit{Enhancing the stability of social networks}. Maintaining overall stability in social networks is essential for preserving engagement, sustaining information flow, and ensuring community integrity. Relationships between users form the foundation of social interactions, and the loss of key connections can lead to community fragmentation, reduced participation, and weakened information diffusion. Traditional approaches to network stability often focus on identifying and retaining influential users, but this strategy overlooks the structural importance of critical relationships that maintain network cohesion. By anchoring certain key social connections, the overall stability of the network can be significantly enhanced. Reinforcing essential interpersonal ties ensures that communities remain connected, even if some users become inactive or reduce their interactions. This approach helps prevent network fragmentation while ensuring that important social structures remain intact, enabling continuous engagement and interaction. 
\item \textit{Enhancing the stability of transportation networks}. Maintaining overall stability and resilience in transportation networks is essential for ensuring efficient and reliable mobility. Transportation systems frequently experience disruptions due to traffic congestion, accidents, or other external factors. When key connections in the network are weakened or lost, they can trigger cascading failures, leading to severe delays and inefficiencies across the system. To address this issue, the ATR problem studied in this paper can be leveraged to identify critical connections in the transportation network. By detecting and reinforcing these essential links, transportation networks can achieve greater adaptability, reduced vulnerability to disruptions, and improved operational efficiency.
\end{itemize}


\begin{example}
As depicted in Fig. \ref{fig:te}, we first consider the vertex anchoring approach \cite{zhangfanefficiently2018} with  $k=4$.
In Fig. \ref{fig:te}(a), the trussness of solid edges is 4, while that of dotted edges is 3.
Anchoring vertex $v_8$  ensures that edges $(v_3,v_8)$ and $(v_4,v_8)$ remain in the 4-truss, as they form the triangle $\Delta_{v_3v_4v_8}$. 
Anchoring vertex $v_6$  ensures that edges $(v_4,v_6)$ and $(v_5,v_6)$ remain in the 4-truss, as they form the triangle $\Delta_{v_4v_5v_6}$. 
This has the same effect as directly anchoring edge $(v_3,v_8)$ and $(v_5,v_6)$. 
Notably, anchoring $(v_3,v_8)$ and $(v_5,v_6)$ also increases the trussness of 3-truss edges. 
In Fig. \ref{fig:te}(b), each bolded edge is assumed to belong to a separate clique of size 5, ensuring a trussness of 5.
Here, the trussness of edge $(v_{13},v_{15})$ is 3, while  edge $(v_{13},v_{14})$ has a trussness of 4.
Anchoring vertex $v_{14}$ in Fig. \ref{fig:te}(b) has no effect when $k=4$, because $(v_{13},v_{14})$ is already a 4-truss edge.
However, directly anchoring edge $(v_{13},v_{15})$ increases the trussness of $(v_{13},v_{14})$ to 5.
This example highlights that the edge anchoring approach can consider trussness increase globally, whereas choosing a suitable $k$ is challenging for the vertex anchoring approach \cite{zhangfanefficiently2018}.
Furthermore, it demonstrates that directly anchoring edges more effectively target critical edges, especially when a vertex has many incident edges.
\end{example}

\myparagraph{Challenges} 
To the best of our knowledge, we are the first to study the ATR problem.
We prove that the problem is NP-hard.
While truss decomposition can be completed in polynomial time \cite{jiawangTrussdecomposition}, an exact solution requires exhaustively evaluating the trussness gain for every possible combination of $b$ anchor edges, which is computationally infeasible.
Moreover, we prove that the trussness gain function is non-submodular, further complicating the problem.
Although estimating the global trussness gain for multiple anchors is impractical, we observe that trussness changes are relatively localized when anchoring a single edge.
This observation leads us to use a greedy heuristic to iteratively choose the optimal anchor edges within a given budget $b$.
However, even with a greedy approach, the direct implementation is prohibitively time-consuming.
This is because each edge in the graph is a potential anchor, resulting in a large candidate set that must be evaluated for its trussness gain. 
Besides, after selecting an anchor edge in each iteration, the trussness of other edges in the graph may change, necessitating the re-computation of trussness gain for all edges in subsequent iterations.
These challenges are further exacerbated as the graph size grows.

Existing studies on related problems, such as the anchor $k$-core and anchor $k$-truss problems, provide limited solutions for our problem.
Bhawalkar \etal \cite{kshpreventing2015} introduced the anchor $k$-core problem, which was further explored by Zhang \etal \cite{zhang2017olak} and Linghu \etal \cite{QingyuanGobal2020}. However, these approaches rely on vertex deletion orders, which are not applicable to the $k$-truss as it is defined on edges and triangles.
While the $k$-core treats all edges equally, the $k$-truss evaluates edge strength based on the number of triangles they form, providing a more nuanced model of network structure.
Zhang \etal \cite{zhangfanefficiently2018} introduced an efficient algorithm for the anchor $k$-truss problem, focusing on selecting $b$ vertices as anchors to ensure that more vertices can be retained within a specific $k$-truss structure. The selected anchor vertices are the endpoints of the edges with trussness equal to $k-1$. In contrast, our problem focuses on increasing the overall trussness across the entire graph rather than targeting a particular $k$-truss. As a result, the anchoring edges identified by our problem are distributed across different trussness levels, rather than being restricted to edges within a single $k$-truss. Due to this fundamental difference, the method in \cite{zhangfanefficiently2018} is not a viable approach for our problem.
Consequently, our problem presents unique challenges that necessitate the development of advanced strategies to accelerate or avoid the computation of trussness gain for each candidate anchor.

\myparagraph{Our solution}
Given the computational challenges of the ATR problem, we employ a greedy heuristic to iteratively select the optimal anchor edge. 
In each iteration, the trussness gain of each edge is computed, and the edge with the largest gain is chosen. 
Then a straightforward approach is to utilize truss decomposition to calculate the updated trussness for each edge.
However, recomputing the trussness of each edge after every iteration by using this method is costly and impractical for large graphs.

To address this issue, we draw inspiration from \cite{zhangfanefficiently2018} and revisit the deletion order in truss decomposition, leveraging it to accelerate our algorithm.
Specifically, when an edge is anchored, it prevents certain edges from being removed during the truss decomposition. We refer to these edges as followers.
We introduce the concept of the \textit{upward-route} and prove that only edges along this route can become followers of the anchor edge.
By focusing on the upward-route and applying a support check mechanism, we can efficiently identify the followers of the anchor.
Furthermore, to reuse intermediate results from previous iterations, we propose a tree structure that groups edges into manageable tree nodes.
This structure enables efficient determination of whether previously computed results for a candidate anchor can be reused, thereby avoiding redundant computations.
If a tree node cannot be utilized again, the follower computation is conducted solely within that node.
By combining these techniques, we develop the \texttt{GAS} algorithm, which efficiently identifies the best anchor edge in each iteration, providing a practical solution to the ATR problem.


\myparagraph{Contributions} Our main contributions are as follows.
\begin{itemize}[leftmargin=1em]
    \item  We introduce the anchor trussness reinforcement problem, which aims to select $b$ edges as anchors to maximize the global trussness gain and enhance network stability. We formally define the problem and prove its NP-hardness.
    
    \item We revisit the edge deletion order in truss decomposition and partition edges into layers. Based on the orders of edges' deletion, we propose the concept of an upward-route rooted at the anchor edge, which significantly narrows the search space. Combined with a support check process, this approach enables efficient computation of trussness gain when selecting an edge as an anchor in social networks.
    
    \item  We develop a tree structure to categorize edges based on their triangle connectivity and trussness values. This structure allows us to precisely identify reusable results for follower edges after anchoring an edge in each round, thereby avoiding extensive recomputation.
    
    \item We perform extensive experiments on 8 real-world datasets to assess the effectiveness and efficiency of the proposed techniques.
\end{itemize}


\begin{table}[h]
    \centering
    \footnotesize
    \caption{Summary of notations}
    \renewcommand\arraystretch{1.2}
    \label{summary of notations}
    \begin{tabular}{|c|l|}
        \hline
        \textbf{Notation}   &\ \textbf{Definition} \\ 
        \hline\hline
        $G$                 &\ An unweighted undirected graph \\ 
        \hline
        $G_{e_x}/G_A$          &\ The graph G after anchor edge $e_x$ $/$ anchor set $A$ \\ 
        \hline    
        $E,V$         &\ The edge set and the vertex set of graph $G$ \\ \hline
        $S$                 &\ The subgraph of $G$ \\ \hline
        $N(u,S)$            &\ The set of neighbor of vertex $u$ in $S$ \\ \hline
        $sup(e,S)$          &\ The number of triangle that containing $e$ in $S$ \\ \hline
        $T_k(G)$            &\ The $k$-truss of $G$ \\ \hline
        $t(e)$              &\ The trussness of edge $e$ \\ \hline
        $t^A(e)$              &\ The trussness of edge $e$ after anchor $A$\\ \hline
        $k$                 &\ The support constraint \\ \hline
        $b$                 &\ The number of edge budget \\ \hline
        $A$                 &\ The anchored edge set \\ \hline
        $TG(A,G)$  &\ The trussness gain after anchor edge set $A$ in $G$ \\ \hline
        $\triangle_{uvw}$ &\ The triangle that including three vertices $u,v,w$ \\ \hline
        $F(e,G)$ &\ The followers of the anchor $e$ in $G$ \\ \hline
    \end{tabular}
\end{table}
\section{Preliminaries}
\label{sec:pre}

In this section, we first introduce some related concepts, then formally define the problem and establish its computational hardness. Frequently used mathematical notations throughout the paper are summarized in Table~\ref{summary of notations}.

\subsection{Problem definition}  

We consider an unweighted and undirected graph $G = (V, E)$, where $V$ and $E$ represent the sets of vertices and edges, respectively. 
Let $n = | V |$ and $m = | E |$ denote the number of vertices and edges in $G$, respectively. 
For a given subgraph $S$ of $G$, we use $N(u, S)$ to denote the neighbor set of $u$ in $S$, and $deg(u, S) = |N(u, S)|$ to specify its degree. A triangle, denoted by $\Delta_{uvw}$, consists of three mutually connected vertices $u$, $v$ and $w$.

\begin{definition}[Support]
Given a subgraph $S$ of $G$, the support of an edge $e(u,v)$ in $S$, denoted by $sup(e,S)$, is the number of triangles in $S$ that containing $e$, \ie $sup(e,S)=|N(u,S) \cap N(v,S)|$.
\end{definition}

\begin{definition}[$k$-truss]
Given a graph $G$, a subgraph $S$ is the $k$-truss of $G$, denoted by $T_k(G)$, if $(i)$ $sup(e,S) \geq k-2$ for each edge $e$ in $S$; $(ii)$ $S$ is maximal, \ie any supergraph of $S$ does not satisfy condition $(i)$; and $(iii)$ there is no isolated vertices in $S$.
\end{definition}

\begin{definition}[Trussness]
Given a graph $G$, the trussness of an edge $e$ in $G$, denoted by $t(e)$, is the largest $k$ such that there exists a $k$-truss containing $e$, \ie $t(e) = \max\{k|e \in T_k(G)\}$.
\end{definition}

To compute the trussness for each edge $e \in G$, we utilize the truss decomposition method \cite{jiawangTrussdecomposition}, whose details are shown in Algorithm~\ref{alg:truss-decomp}.
For each $k$ starting from 2, the algorithm iteratively removes the edges with support no larger than $k-2$.
We set the trussness of the removed edge as $t(e) = k$.
When $e$ is removed, the support of other edges forming a triangle with $e$ is reduced by 1.
This process continues until the support of all the remaining edges is larger than $k-2$.
The time complexity is $O(m^{1.5})$ \cite{jiawangTrussdecomposition}.


In this paper, when an edge $e$ in $G$ is deemed ``anchored'', its support is considered to be positive infinity, \ie $sup(e, G) = +\infty$ for each anchored edge $e$. Each anchored edge is termed an ``anchor'' or ``anchor edge''. The collection of all anchor edges is represented by $A$.
The presence of anchor edges can increase the trussness of other edges.
We use $G_A$ to represent the graph $G$ with the anchor edge set $A$, and $t^A(e)$ to denote the trussness of edge $e$ in $G_A$.
Note that, the computation of truss decomposition on $G_A$ is essentially the same as on $G$, except that the anchor edges are always retained in $G_A$.

\begin{algorithm}[t]
    \SetVline
    \footnotesize
    \caption{{\textbf{TrussDecomp}$(G)$}}
    \label{alg:truss-decomp}
    \Input{$G$ : the graph}		
    \Output{the trussness $t(e)$ of each edge $e \in G$}
    
    \State{$k \leftarrow 2$}   
    \While{exist edge in $G$}
    {
        \While{$\exists e(u,v) \in G$ with $sup(e,G) \leq k-2$}
        {
            \For{$w \in N(u,G) \cap N(v,G)$)}
            {
                \State{$sup((u,w),G) \gets sup((u,w),G) - 1$}
                \State{$sup((v,w),G) \gets sup((v,w),G) - 1$}
            }
            \State{$t(e) \gets k$}
            \State{remove $e$ from $G$}
        }
        \State{$k \gets k+1$}
    }
    \Return{$t(e)$ for each edge $e \in G$};
\end{algorithm}

Existing research on $k$-truss maximization problem predominantly focuses on anchor vertices and specific $k$ values \cite{zhangfanefficiently2018}. 
However, it is more pertinent to emphasize the strength of the connections between pairs of vertices, \ie the strength of the edges.
Moreover, enhancing the cohesion of the overall community proves more advantageous than reinforcing the $k$-truss for a fixed $k$ value.
In many studies~\cite{VLDB-eff-community-search,alifirmtruss2022,huang2014querying,Esratruss2017}, trussness has been used to evaluate community cohesion, where a higher level of trussness among edges within a community indicates a more stable and well-connected social structure.
Thus, in this paper, we first give the Definition \ref{def:tg}, then propose and investigate the anchor trussness reinforcement (ATR) problem.

\begin{definition}[Trussness gain]\label{def:tg}
    Given a graph $G=(V,E)$ and an anchored edge set $A$, the trussness gain of $G$ after anchoring $A$, denoted by $TG(A,G)$, is the total enhancement in trussness for all edges in $E \backslash A$, \ie $TG(A,G)= \sum_{e \in E \backslash A}(t^A(e) - t(e))$. 
\end{definition}
 
\myparagraph{Problem statement}
Given a graph $G$ and a budget $b$, the ATR problem aims to identify an edge set $A$ of $b$ edges in $G$ such that $TG(A,G)$ is maximized.

\subsection{Problem analyse}

\begin{theorem}
Given a graph $G$, the ATR problem is NP-hard.
\end{theorem}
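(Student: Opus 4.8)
The plan is to establish NP-hardness by a reduction from a known NP-hard problem. The most natural candidate is the \emph{Maximum Coverage} problem, or alternatively a reduction from a clique/independent-set style problem, since the ATR problem asks us to select $b$ edges whose anchoring maximizes a cumulative benefit. The key difficulty is that the objective $TG(A,G)$ is a global, combinatorial quantity: anchoring an edge raises the support of all edges sharing a triangle with it, and these effects cascade through the truss-decomposition peeling order. So the first step is to build a gadget in which the trussness-gain contributed by anchoring a single edge is \emph{localized and predictable}, decoupling the interactions between distinct candidate anchors as much as possible.

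Concretely, I would design the reduction so that each ground-set element of the covering instance corresponds to a candidate anchor edge, and each set (or constraint) corresponds to a cluster of edges whose trussness can be lifted only if a covering anchor is chosen. The construction should force the following behavior: an edge in a given cluster has trussness exactly one below some target value $k$, and it is held at that lower value solely because of a single ``weak'' triangle; anchoring the corresponding candidate edge sets that edge's support to $+\infty$, which (via the peeling argument already described in the excerpt, namely that an anchor can rescue edges along its structural neighborhood) promotes the cluster's edges up to trussness $k$. By choosing the cluster sizes to encode the weights/coverage counts of the covering instance, maximizing $TG(A,G)$ over budget-$b$ anchor sets becomes equivalent to maximizing coverage with $b$ sets. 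I would verify that the gadget graph is constructible in polynomial time in the size of the covering instance.

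The crux of the correctness argument has two halves. \textbf{Soundness:} I must show that an optimal anchor set in the ATR instance never benefits from ``wasting'' anchors on edges that do not correspond to valid covering choices, and that the total trussness gain is a faithful, monotone encoding of the number of covered elements. This requires a careful accounting of the incremental trussness of each edge under the peeling process, confirming that no unintended cross-cluster interactions inflate the gain. \textbf{Completeness:} conversely, any covering solution must induce at least the corresponding trussness gain in the ATR instance. Because the excerpt already notes that $TG$ is non-submodular and that anchoring can cascade, the main obstacle will be controlling these cascades---ensuring that the gadget's separated clusters do not leak trussness benefit into one another, so that the value of an anchor set decomposes cleanly as a sum over the clusters it covers. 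I would isolate clusters using high-trussness ``padding'' subgraphs (e.g., large cliques attached so that their internal edges are already saturated and cannot be further promoted), thereby pinning down exactly which edges can change trussness and by how much.

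If the coverage reduction proves awkward because of cascade leakage, the fallback is a reduction from an even more rigid NP-hard problem such as \emph{Maximum Independent Set} or \emph{3-SAT}, where the gadgets can be made small enough that each anchor's effect is provably confined to a constant-size neighborhood, at the cost of a more intricate graph construction. In either route, the decisive technical step---and the one I expect to consume most of the effort---is proving the exact trussness-gain value of each gadget under anchoring, since this pins the equivalence between the optimization objectives on both sides of the reduction.
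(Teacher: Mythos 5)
Your plan follows essentially the same route as the paper: a reduction from Maximum Coverage in which high-trussness clique ``padding'' pins down which edges can be promoted, each liftable edge sits exactly one level below its target, and the gain of an anchor set decomposes as the number of covered elements. All of the right ingredients are named. Two issues, however, keep this from being a proof rather than a proof plan.

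First, a small but real mismatch: you map ground-set \emph{elements} to candidate anchor edges and \emph{sets} to clusters of liftable edges. Since Maximum Coverage selects $b$ sets to cover elements, the natural (and the paper's) mapping is the reverse: each set $T_i$ becomes a candidate anchor edge $a_i$, and each element $e_j$ becomes a single edge $f_j$ whose trussness rises by exactly $1$ when some anchored $a_i$ with $e_j \in T_i$ forms a triangle with it. Your inverted mapping still reduces from an NP-hard problem (it is Maximum Coverage on the transposed incidence structure), but as written it does not encode the instance you started from, and you would need to say so explicitly.

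Second, and more substantively, the proposal explicitly defers ``the decisive technical step'': constructing the gadget and proving the exact trussness of every edge before and after anchoring. That step is the entire content of the paper's argument. Concretely, the paper attaches $(t+3)$-cliques so that each $f_j$ has trussness exactly $t+2$ and each $a_i$ has trussness $|T_i|+2 \leq t+2$, and then verifies five facts: anchoring $a_i$ raises only the trussness of the $f_j$'s in triangles with it, each by exactly $1$; anchoring several $a_i$'s covering the same $f_j$ still yields gain $1$ for that $f_j$ (this is what makes the objective \emph{coverage} rather than a sum with multiplicity); and anchoring any edge outside $E_a$ yields zero gain. Without these verifications---in particular the last two, which rule out both double counting and ``wasted'' anchors elsewhere in the padding---the equivalence of optima is asserted but not established. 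Your text gestures at both concerns (``no unintended cross-cluster interactions,'' ``never benefits from wasting anchors''), but gives no gadget against which they could be checked, so the reduction remains unproved.
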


\begin{proof}
    We reduce the maximum coverage problem \cite{karpMCproblem} to the ATR problem.
    Given a budget $b$ and a collection of sets, each containing some elements, the maximum coverage problem seeks to select $b$ sets that cover the most elements.
    We consider an arbitrary instance of maximum coverage problem with $s$ sets $\{ T_1,T_2,\dots,T_s \}$ and $t$ elements $\{e_1,e_2,\dots,e_t\} = \cup_{1\leq i \leq s}T_i$. 
    We assume that $b < s < t$. 
    Following this, we proceed to construct an instance of the ATR problem within the graph $G$ as outlined below.

    Fig. \ref{fig:np-hard}(a) is a constructed example for $s=3, t=4$.
    We divide $G$ into three parts, $E_a, E_f$ and some $(t+3)$-cliques (fully connected subgraph formed by $t+3$ vertices). The $E_a$ part contains $s$ edges, $\ie$ $E_a = \{a_1, a_2, \dots, a_s\}$. 
    Each edge $a_i$ corresponds to $T_i$ in the maximum coverage problem instance. 
    The $E_f$ part contains $t$ edges, \ie  $E_f = \{f_1,f_2,\dots,f_t\}$. 
    Each edge $f_j$ corresponds to $e_j$ in the maximum coverage problem instance. 
    For each edge $a_i \in E_a$, if its corresponding $T_i$ contains $e_j$, we add a $(t+3)$-clique (the trussness of each edge in $(t+3)$-clique is $t+3$). We then make $a_i$, $f_j$ and an arbitrarily chosen edge of that $(t+3)$-clique form a triangle (as shown in the left of Fig. \ref{fig:np-hard}(b)).
    For each $f_j \in E_f$, we add $2t$ $(t+3)$-cliques.
    Then, we create $t$ triangles for $f_j$, each formed by $f_j$ and any two edges of two $(t+3)$-cliques (as shown in the right of Fig. \ref{fig:np-hard}(b)).
    At this point, the construction is finished.

    With the construction, we can have the following results: $(i)$ The trussness of edge $a_i \in E_a$ is $|T_i|+2 \leq t+2$. $(ii)$ The trussness of edge $f_j \in E_f$ is $t+2$, because we made it form $t$ triangles with the edge whose trussness is $t+3$ during the construction process. $(iii)$ Anchoring any edge $a_i \in E_a$ can only increase the trussness of the edge in $E_f$ that forms a triangle with it by 1. $(iv)$ Even if multiple edges in $E_a$ are anchored, the trussness of the edges in $E_f$ can only be increased by 1. $(v)$ Anchoring any edge in $G \backslash E_a$ cannot obtain trussness gain.
    By doing this, we ensure that only the edges in $E_a$ can obtain trussness gain, and the trussness gain is the number of edges that form triangles with it in $E_f$. 
    Therefore, the optimal solution to the ATR problem is equivalent to that of the MC problem. Given that the maximum coverage problem is NP-hard, it follows that the ATR problem is also NP-hard for any $b$.
    
\end{proof}

\begin{figure}
	\centering
     \subfigure[Construction example] {\includegraphics[width=1\linewidth]{ 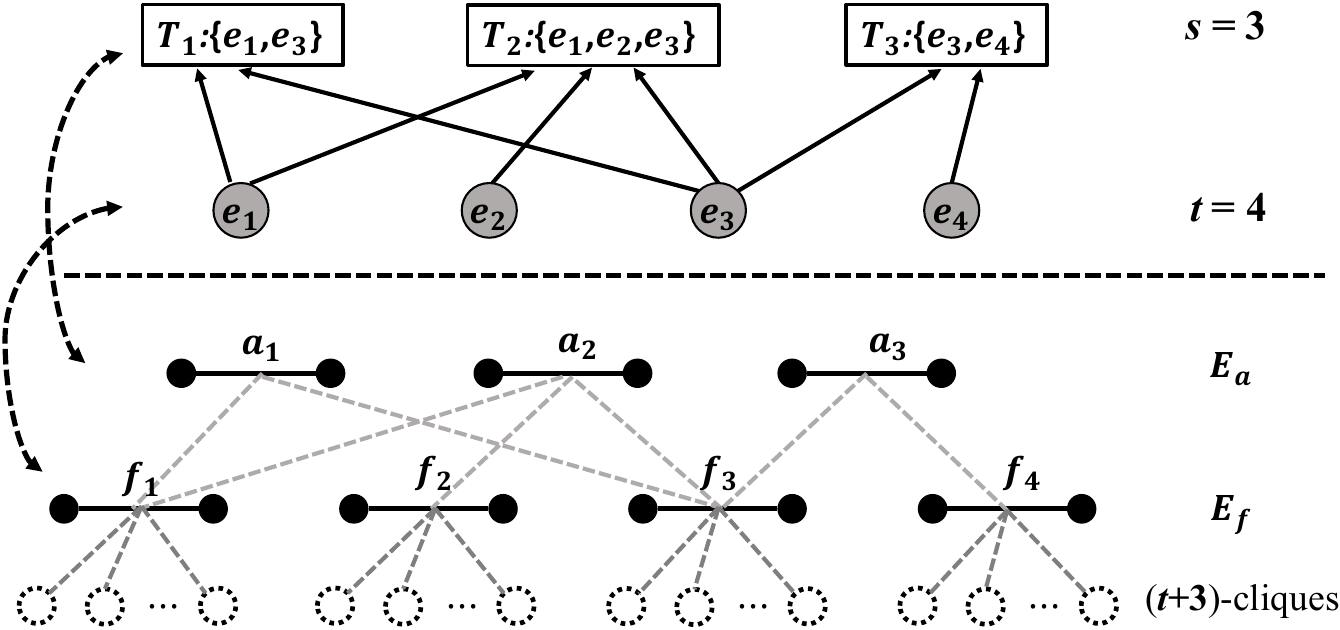}}
     
    \subfigure[Structure illustration] {\includegraphics[width=1\linewidth]{ 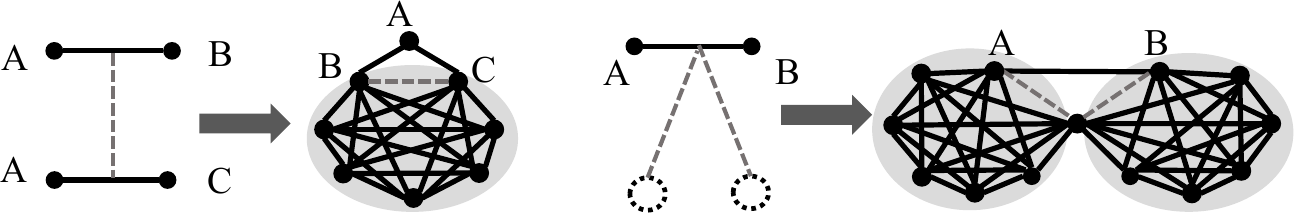}}
 
	\caption{{Example of NP-hard proof}}
	\label{fig:np-hard}
\end{figure}


\begin{theorem}
    The trussness gain $TG(\cdot)$ is not submodular.
\end{theorem}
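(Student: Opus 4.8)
Recall that $TG(\cdot, G)$ is \emph{submodular} precisely when, for all anchor sets $A \subseteq B \subseteq E$ and every edge $e \in E \setminus B$,
\[
TG(A \cup \{e\}, G) - TG(A, G) \;\geq\; TG(B \cup \{e\}, G) - TG(B, G).
\]
To disprove submodularity it is enough to exhibit one graph $G$, a nested pair $A \subseteq B$, and an edge $e \notin B$ whose marginal gain over the larger set $B$ strictly exceeds its marginal gain over $A$ --- i.e.\ a case of \emph{increasing} rather than diminishing returns. My plan is to take the simplest such witness: $A = \emptyset$, $B = \{e_1\}$, and $e = e_2$, and to engineer a graph in which $e_1$ and $e_2$ exhibit a superadditive ``both-or-nothing'' synergy, namely $TG(\{e_1\}, G) = TG(\{e_2\}, G) = 0$ but $TG(\{e_1, e_2\}, G) > 0$. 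Then the left-hand marginal is $TG(\{e_2\}, G) - 0 = 0$ while the right-hand marginal is $TG(\{e_1, e_2\}, G) - TG(\{e_1\}, G) > 0$, directly contradicting the inequality.

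First I would build a gadget whose trussness gain requires both anchors acting together. The idea is to place a family of target edges $t_1, \dots, t_r$, each of current trussness $k-1$, where every $t_j$ is one unit of support short of reaching the threshold $k-2$ needed for trussness $k$, and where the two ``missing'' triangles are supplied respectively through $e_1$ and through $e_2$. All remaining structure is hidden inside large cliques, so that the auxiliary edges carry a high, fixed trussness and never interfere with the decomposition of the gadget. The crucial feature is that in the unanchored graph both $e_1$ and $e_2$ are themselves removed at level $k-1$ of Algorithm~\ref{alg:truss-decomp}, dragging each $t_j$ below the support threshold; anchoring only one of them leaves the triangles through the other to collapse, so every $t_j$ still loses a unit of support and is removed at level $k-1$ unchanged, giving gain $0$. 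Only when both $e_1$ and $e_2$ are anchored do both families of triangles survive, keeping each $t_j$ at support $\geq k-2$ and promoting its trussness to $k$, which yields a strictly positive $TG(\{e_1, e_2\}, G)$.

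I expect the main obstacle to be getting the decomposition arithmetic to come out exactly as intended: I must verify that neither single anchor alters \emph{any} edge's trussness (so that both first-order gains are genuinely $0$), while the joint anchor strictly raises the trussness of at least one target edge. This means carefully tracing the edge-removal order of Algorithm~\ref{alg:truss-decomp} and confirming that the target edges are deleted at level $k-1$ in the unanchored and singly-anchored graphs but survive to level $k$ once both anchors are present. Embedding every supporting structure inside sufficiently large cliques keeps these trussness computations local and lets all of $t$, $t^{\{e_1\}}$, $t^{\{e_2\}}$, and $t^{\{e_1, e_2\}}$ be read off by inspection, so the cleanest write-up is a single explicit small graph that realizes the synergy.
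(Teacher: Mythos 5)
Your proposal is correct and takes essentially the same approach as the paper: both arguments exhibit two anchor edges whose individual trussness gains are zero but whose joint gain is strictly positive, which violates submodularity (your marginal-gain formulation and the paper's $TG(A,G)+TG(B,G)\geq TG(A\cup B,G)+TG(A\cap B,G)$ formulation are equivalent). The only difference is that the paper reads this synergy off its existing toy example (Fig.~\ref{fig:te}(a), anchoring $(v_3,v_8)$ and $(v_5,v_6)$) rather than constructing a fresh gadget, so to finish you would just need to pin down one explicit small graph as you describe.
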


\begin{proof}
    If $TG(\cdot)$ is submodular, for arbitrary anchor set $A$ and $B$, we have $TG(A,G) + TG(B,G) \geq TG(A\cup B,G) + TG(A\cap B,G)$. Now we consider the graph in Fig. \ref{fig:te}(a) with two anchor sets 
    $A = \{(v_3,v_8)\}$ and $B = \{(v_5,v_6)\}$. We have $TG(A,G) + TG(B,G) = 0$ and $TG(A\cup B,G) + TG(A\cap B ,G) = 3$, because when we anchor both A and B, the trussness of dotted edges can increase to 4. Thus, the $TG(\cdot)$ is not submodular.
\end{proof}

\section{Solution}
\label{sec:skc}

In this section, we begin by introducing a baseline algorithm that iteratively chooses the optimal anchor edge,
\ie the edge that can bring the highest trussness gain.
Then two advanced techniques are proposed in Section~\ref{sec:afc} and Section~\ref{sec:fct} to enhance the baseline algorithm.

\subsection{Baseline}\label{sec:base}

The inherent complexity of the problem makes exact solutions very time-consuming, therefore, we develop a heuristic greedy algorithm.
As shown in Algorithm~\ref{alg:baseline}, we iteratively choose the edge with the highest $TG(\{e\},G_A)$ as the anchor (lines 2-5). 
In each iteration, to compute $TG(\{e\},G_A)$ for each edge $e$ in line 3, we utilize the truss decomposition (\ie Algorithm~\ref{alg:truss-decomp}) on $G_{A \cup \{e\}}$ to calculate the trussness gain (line 3). 
This process is repeated for $b$ iterations to obtain the anchor edge set $A$.




\begin{algorithm}[t]
{
    \SetVline
    \footnotesize
    \caption{\textbf{Greedy algorithm}}
    \label{alg:baseline}
    \Input{$G$ : the graph, $b$ : the budget}		
    \Output{$A$ : the set of anchor edges}

    \State {$A \gets \varnothing$}
    \While {$|A| < b$}
    {
        \State{\textbf{for each} $e \in E \backslash A$ \textbf{do} compute $TG(\{e\},G_A)$}
        \State{$e^* \gets \mathop{\argmax}\limits_{e \in E \backslash A} TG(\{e\},G_A)$}
        \State{$A$ $\gets A \cup \{e^*\}$}   
    }
    \Return{$A$};
}
\end{algorithm}

While the greedy approach presents an expedited solution for the problem, the frequent use of truss decomposition to calculate truss gain makes the algorithm still unsuitable for larger-scale networks.
In line 3,
we apply truss decomposition on the entire graph to compute the trussness gain for an edge $e$, whose time complexity is $O(m^{1.5})$. 
However, we observe that only a few edges will increase their trussness after anchoring $e$, making it wasteful to recalculate the trussness for the entire graph. 
Moreover, in each iteration, we need to compute the trussness gain for each edge in the graph to find the best one, whose time complexity is $O(m^{2.5})$.
But after anchoring an edge, the trussness gain for most edges does not change. 
This redundant process is repeated for $b$ rounds to extract the anchor set $A$. The overall time complexity is $O(b \cdot m^{2.5})$ which can only be applied on small graphs. 
Thus the subsequent sections are dedicated to enhancing the efficiency of the greedy algorithm by accelerating the trussness gain computation for each edge and reducing the redundancy computation during each iteration.


\subsection{Accelerating trussness gain computation}
\label{sec:afc}
In this section, we present an efficient method to calculate the trussness gain for a selected anchor edge.
For ease of understanding, we discuss this efficient method with $b = 1$, \ie selecting the best anchor edge which leads to the largest trussness gain.
When $b > 1$, this efficient method can be directly used to obtain a new anchor edge by simply treating the $G_A$ as $G$ after each iteration.
We first present some necessary lemmas and concepts.






\begin{lemma}\label{lem:increase by 1}
After the anchoring of edge $x$ in $G$, each edge $e \in E$ can increase its trussness by at most 1, \ie $t^{\{x\}}(e) - t(e) \leq 1$.
\end{lemma}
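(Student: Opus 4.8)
The plan is to prove the statement by contradiction, leveraging the truss-decomposition process (Algorithm~\ref{alg:truss-decomp}) and analyzing how anchoring a single edge $x$ alters the deletion order. Recall that the trussness of an edge $e$ equals the value $k$ at which $e$ is removed during decomposition. Anchoring $x$ means setting $sup(x,G)=+\infty$, so $x$ is never deleted and never triggers support decrements on the triangles it participates in. The key observation is that anchoring one edge removes at most one ``unit'' of support pressure per affected triangle, so its influence on any other edge's trussness is bounded.

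First I would set up the argument cleanly. Run truss decomposition on $G$ (yielding trussness $t(\cdot)$) and in parallel on $G_{\{x\}}$ (yielding $t^{\{x\}}(\cdot)$), and compare the two executions level by level in $k$. Observe that anchoring $x$ can only \emph{retain} support that would otherwise have been lost, i.e.\ for every edge $e$ and every intermediate value of $k$, the support of $e$ in the anchored run is at least its support in the unanchored run (since the only difference is that $x$ is never deleted, so the decrements it would have caused to its two incident triangle-edges never occur). This monotonicity gives $t^{\{x\}}(e)\ge t(e)$ immediately, establishing the lower bound; the real content is the upper bound $t^{\{x\}}(e)-t(e)\le 1$.

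Next I would prove the upper bound by contradiction. Suppose for contradiction there exists an edge $e$ with $t^{\{x\}}(e)\ge t(e)+2$. Set $k=t(e)$. In the unanchored run, $e$ is removed at level $k$, meaning that when decomposition reaches level $k$, the support of $e$ drops to $k-2$ (or below) because of the cascade of deletions among edges of trussness $\le k$. Consider the set $D$ of edges that are deleted at level $\le k+1$ in the unanchored run but survive past level $k+1$ in the anchored run; $e$ lies in $D$ by assumption. The only ``extra'' support any edge in $D$ enjoys in the anchored run ultimately traces back to the single edge $x$ being retained. The crux is a counting/charging argument: I would show that the extra support provided by retaining $x$ can raise the effective support of each affected edge by at most~$1$ at the critical level, because $x$ sits in a bounded collection of triangles and each such triangle contributes exactly one unit to exactly two of its edges. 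Hence no edge can cross two full trussness levels on the strength of a single anchored edge.

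The main obstacle will be controlling the \emph{cascade}: anchoring $x$ prevents $x$'s deletion, which keeps some neighboring edge alive, which in turn keeps its support up, and so on, so the effect is not confined to triangles directly containing $x$. I expect the cleanest route is an inductive/peeling argument on the deletion order in the two parallel runs, pairing each survival event in the anchored run with the corresponding event in the unanchored run and showing the ``surplus'' support carried forward is never more than $1$ at any fixed level $k$. Formally, I would argue that if every edge's trussness increases by at most~$1$ through level $k-1$, then at level $k$ the retained support attributable to $x$ is still bounded by~$1$ per edge, closing the induction. This localizes the analysis enough to contradict the assumed jump of~$2$, completing the proof.
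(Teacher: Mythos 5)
The paper defers this proof to its online appendix, so the comparison is against the standard argument for such ``anchoring raises trussness by at most one'' statements (as in the anchored $k$-core and anchored $k$-truss literature the paper cites). That argument is a one-step witness construction: let $k=t^{\{x\}}(e)$ and consider $S=T_k(G_{\{x\}})\setminus\{x\}$. Every non-anchored edge of $T_k(G_{\{x\}})$ has support at least $k-2$ there, and since two distinct edges can lie in at most one common triangle, deleting the single edge $x$ lowers each remaining edge's support by at most $1$. Hence every edge of $S$ has support at least $k-3=(k-1)-2$ in $S$, so by maximality $S\subseteq T_{k-1}(G)$ and $t(e)\ge k-1$. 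No analysis of the peeling dynamics is needed.

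Your proposal instead tracks two parallel runs of truss decomposition and rests on the claim that ``the retained support attributable to $x$ is bounded by $1$ per edge at the critical level.'' That is the genuine gap: the justification you give (each triangle containing $x$ contributes one unit to exactly two of its edges) only covers edges that share a triangle with $x$. For an edge $e$ reached through the cascade, the surplus support at level $k+1$ comes from the survival of $e$'s triangle-neighbors, and many of them can be promoted simultaneously; an edge with $t(e)=k$ can easily have several extra surviving triangles in $T_{k+1}(G_{\{x\}})$ compared to $T_{k+1}(G)$ (where it has none), so the per-edge-surplus-at-most-one invariant you want to carry through the induction is false as stated. The reason the lemma nevertheless holds is not that the surplus is small, but that the entire survivor set minus $x$ is itself a witness subgraph for the $(k-1)$-truss of $G$ --- which is exactly the containment argument above. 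Your monotonicity observation ($t^{\{x\}}(e)\ge t(e)$) is fine, but the upper bound needs to be rebuilt around the witness subgraph rather than a per-level charging scheme.
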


Note that due to space constraints, the proofs of all lemmas in this paper are included in our online appendix \cite{appendix}.


After anchoring $x$, we refer to the edges whose trussness increases as the \textit{follower} of $x$, denoted by $F(x,G)$, \ie $F(x,G) = \{e \in G | t^{\{x\}}(e) > t(e)\}$.
According to Lemma \ref{lem:increase by 1}, we assert $TG(\{x\},G)=|F(x,G)|$.
Consequently, when an edge $x$ is anchored, the computation of the trussness gain can be translated into determining the number of followers of $x$. 


\begin{definition}[$k$-hull]
    Given a graph $G$, the $k$-hull of $G$, denoted by $H_k(G)$, is a set of edges with trussness equal to $k$, \ie $H_k(G) = \{e \in G | t(e) = k\}$.
\end{definition}

\begin{figure}[t]
	\centering
	\includegraphics[width=0.8\linewidth]{ 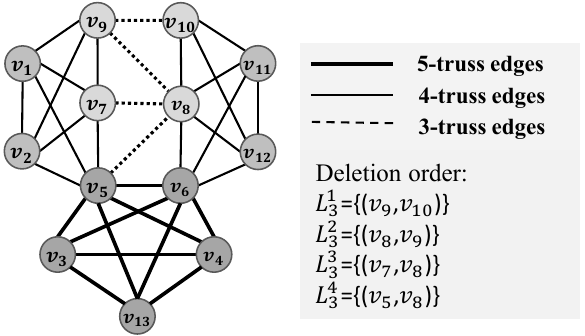}
	\caption{{Running example}}
	\label{fig:ame}
\end{figure}

During truss decomposition (\ie Algorithm~\ref{alg:truss-decomp}), the edges within the $k$-hull are removed in sequential order. 
Specifically, for a given $k$ value, the algorithm removes edges with $sup(e,G) \leq k-2$ in each iteration and modifies the support of edges involved in triangle formations, continuing this process until the support of all remaining edges larger than $k-2$. 
Consequently, based on the deletion order of truss decomposition, we can divide the edges in $k$-hull into several parts (layers).
We use $L_k^i$ to represent the edge set in $k$-hull that is deleted in $i$-th iteration (\ie $i$-th layer), and $l(e)$ to denote the iteration index of $e$, \ie $l(e) = i$ for each $e \in L_k^i$. 
Note that, each edge $e$ has only one iteration index $l(e)$, as it is uniquely associated with only one $k$-hull. 
Given two edges $e_1$ and $e_2$, we define $e_1 \prec e_2$ iff $t(e_1) < t(e_2)$, or $t(e_1) = t(e_2)\ \wedge\ l(e_1) \leq l(e_2)$.

\begin{example}
    In the illustration depicted in Fig. \ref{fig:ame},  the sets of edges with different trussness are $3$-hull, $4$-hull and $5$-hull respectively. We list the $L_k^i$ for edges in the 3-hull. For instance, the support of edge $(v_9,v_{10})$ is $1$, $(v_9,v_{10})$ is deleted in the first round of $4$-truss decomposition. Thus, $L_3^1=\{(v_9,v_{10})\}$. Similarly, we have $L_3^2=\{(v_8,v_9)\}$. Additionally we can easily observe that $(v_9,v_{10}) \prec (v_8,v_9)$.
\end{example}

\begin{definition}[Triangle-connected]\label{def:tri-con}
    Given two edges $e_s$ and $e_t$ in graph $G$, they are triangle connected, if $i)$ $e_s$ and $e_t$ belong to the same triangle, or $ii)$ there exist a series of triangles $\Delta_1, \Delta_2, \cdots, \Delta_j$, such that $e_s \in \Delta_1, e_t \in \Delta_j$, and $\Delta_i \cap \Delta_{i+1} \neq \varnothing$ for $1 \leq i < j$.
\end{definition}


When $e_s$ and $e_t$ belong to the same triangle, we define $e_s$ and $e_t$ to be \textit{neighbor-edge} of each other.
When $e_s$ and $e_t$ are triangle-connected and do not exist in the same triangle, we can derive an edge set $\{e_s, e_1, e_2, \cdots, e_{j-1}, e_t\}$, where $e_i = \Delta_i \cap \Delta_{i+1}$ for $1 \leq i < j$.
Note that this edge set is order-sensitive.
We refer to this edge set as a \textit{route} from $e_s$ to $e_t$, denoted by $R_{e_s \rightarrow e_t}$.

\begin{definition}[Upward-route]\label{def:route}
    Given two edges $e_s$ and $e_t$ in graph $G$, we say there is an upward-route from $e_s$ to $e_t$, denoted by $R_{e_s \rightsquigarrow e_t}$, if $i)$ there exist a route $R_{e_s \rightarrow e_t} = \{e_s, e_1, e_2, \cdots, e_{j-1}, e_t\}$, $ii)$ $t(e_s) = t(e_i) = t(e_t)$ for $1 \leq i < j$, and $iii)$ $e' \prec e''$ for every two consecutive edges $e'$ and $e''$ along this route.
\end{definition}


\begin{example} 
    To explain the upward route in Fig. \ref{fig:ame}, we have $R_{(v_9,v_{10}) \rightsquigarrow (v_5,v_8)} = \{(v_9,v_{10}), (v_8,v_9), (v_7,v_8), (v_5,v_8)\}$ because they are triangle-connected. $R_{(v_9,v_{10}) \rightsquigarrow (v_5,v_8)}$ is also a $R_{(v_9,v_{10}) \rightarrow (v_5,v_8)}$ which satisfy condition $i)$ in Definition \ref{def:route}. Then we have $t(v_9,v_{10}) = t(v_8,v_9) = t(v_7,v_8) = t(v_5,v_8) = 3 $ which satisfy condition $ii)$. Finally we have $(v_9,v_{10}) \prec (v_8,v_9) \prec (v_7,v_8) \prec (v_5,v_8)$ which satisfy condition $iii)$.


\end{example}

\begin{lemma}\label{lem:ep}
    If edge $e_t \in G$ is a follower of the anchor edge $x$ (\ie $e_t \in F(x,G)$), one of the following necessary conditions must be satisfied: $i)$ $e_t$ is a neighbor-edge of $x$ where $t(e_t) > t(x)$ or $t(e_t) = t(x) \wedge l(e_t) > l(x)$; $ii)$ there exist an upward-route $R_{e_s \rightsquigarrow e_t}$ where $e_s \in F(x,G)$ is a neighbor-edge of $x$  and satisfy condition (1).
\end{lemma}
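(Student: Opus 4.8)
The plan is to prove the contrapositive flavor of the statement: characterize exactly why an edge gains trussness after anchoring, and show that this can only ``propagate'' along upward-routes. The central object is the truss decomposition process (Algorithm~\ref{alg:truss-decomp}). Recall that by Lemma~\ref{lem:increase by 1}, any follower $e_t$ has $t^{\{x\}}(e_t) = t(e_t)+1$; write $k = t(e_t)$. So a follower is an edge that, in the original graph, is removed during the $(k+1)$-truss decomposition peeling (i.e.\ it lies in the $k$-hull $H_k(G)$), but in $G_{\{x\}}$ survives long enough to attain trussness $k+1$. The first step is therefore to observe that every follower $e_t$ of trussness $k$ lives in $H_k(G)$, and its survival is caused entirely by the extra support that the anchor $x$ (with $sup(x)=+\infty$) injects into the peeling process at level $k$.

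First I would analyze the peeling at a \emph{fixed} level, say computing the $(k+1)$-truss. In both $G$ and $G_{\{x\}}$ we iteratively delete edges of support $\le k-1$; the only difference is that $x$ is never deleted and contributes $+\infty$ support to its neighbor-edges. The key invariant to establish is: if an edge $e$ is deleted at the \emph{same step} in $G$ and $G_{\{x\}}$ (same layer $l(e)$, same neighborhood history), then it is not a follower. Conversely, a follower must be an edge whose deletion was \emph{deferred} by the anchor. The base case of this deferral is an edge that is a neighbor-edge of $x$: when $x$ refuses to be removed, an edge $e_s$ forming a triangle with $x$ retains one unit of support it would otherwise have lost, which can push it across the $k-1$ threshold and save it. This yields condition $i)$ of the lemma, and I would verify the trussness/layer side-conditions ($t(e_s)>t(x)$, or $t(e_s)=t(x)\wedge l(e_s)>l(x)$) by arguing that $x$ can only help edges that are ``weaker'' than or cotemporal-but-later than $x$ in the peeling order $\prec$ --- an edge stronger than $x$ is already removed after $e_s$ in the relevant peeling and gains nothing.

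The inductive step is the heart of the argument and handles condition $ii)$. Once a neighbor-edge $e_s$ is saved (becomes a follower), $e_s$ itself now contributes extra support to \emph{its} neighbor-edges during the peeling, potentially saving further edges, which in turn save others. I would formalize this as an induction on the deletion order $\prec$: suppose $e_t$ is a follower that is not a neighbor-edge of $x$. Consider the moment $e_t$ would have been deleted in $G$; it survives in $G_{\{x\}}$ only because some triangle-neighbor $e'$ of $e_t$ that \emph{was} deleted before $e_t$ in $G$ is now retained in $G_{\{x\}}$. That $e'$ is itself a follower with $e' \prec e_t$ (it is removed earlier in the $k$-level peeling and has equal trussness $k$, since cross-trussness neighbors do not participate in the $k$-hull peeling), and by the induction hypothesis $e'$ either is a neighbor-edge of $x$ or sits on an upward-route from such a neighbor-edge. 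Appending $e_t$ to that route gives the desired $R_{e_s\rightsquigarrow e_t}$, where the requirement $e'\prec e_t$ exactly supplies condition $iii)$ and the common trussness $k$ supplies condition $ii)$ of the upward-route definition (Definition~\ref{def:route}).

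The main obstacle I anticipate is making the ``saved because a neighbor was saved'' dependency rigorous, because in general an edge's survival depends on the \emph{cumulative} support contributed by \emph{all} its triangle-neighbors, not a single one, and support loss during peeling is a dynamic, order-dependent quantity. The clean way around this is to compare the two peeling runs step by step and track a single ``support surplus'' that the anchor induces: I would show by induction on peeling steps that the set of edges retained in $G_{\{x\}}$ but not in $G$ at each step forms a triangle-connected region that is reachable from $x$ through edges all of trussness $k$ and all appearing in increasing $\prec$-order, so that the boundary of this region is always attained by chaining neighbor-edge relations --- which is precisely an upward-route. Care is also needed at trussness ties (the $t(e_t)=t(x)\wedge l(e_t)>l(x)$ branch): here I must confirm that the anchor, by surviving its own level, defers exactly those same-trussness edges that would otherwise peel after it, and never helps an earlier layer, which follows from the definition of layers $L_k^i$ and the monotonicity of the peeling order.
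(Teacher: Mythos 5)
The paper itself does not contain a proof of this lemma (all lemma proofs are deferred to an online appendix), so a line-by-line comparison is impossible; I can only assess your plan on its merits. Your overall strategy---induction on the deletion order $\prec$, with neighbor-edges of $x$ as the base case and propagation through shared triangles inside a single $k$-hull as the inductive step---is the natural one and would yield a correct proof. However, as written there are two concrete gaps, both stemming from your decision to run the argument on the \emph{intermediate} states of the two peeling processes rather than on the final trusses. First, your inductive step asserts that the savior $e'$ of a non-neighbor follower $e_t$ ``has equal trussness $k$, since cross-trussness neighbors do not participate in the $k$-hull peeling.'' This is not true of the stage-$k$ peeling of $G_{\{x\}}$: by Lemma~\ref{lem:increase by 1} an edge of trussness $k-1$ that becomes a follower is promoted to trussness $k$, so it is present at the start of stage $k$ in $G_{\{x\}}$ but absent from $T_k(G)$, and it can defer deletions during the early iterations of that stage. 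Your step-by-step comparison therefore does not even start the two stage-$k$ runs from the same graph, and the claim that the deferred region stays inside the $k$-hull of $G$ does not follow. Second, a chain of \emph{deferred} edges is not a chain of \emph{followers}: a neighbor-edge of $x$ may be deferred for a few iterations (long enough to propagate a rescue) and then be deleted anyway, in which case the first hinge of your route fails the requirement $e_s\in F(x,G)$ that the lemma explicitly imposes.

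Both gaps close simultaneously if you compare final objects instead of peeling states. Write $k=t(e_t)$ and $S=T_{k+1}(G_{\{x\}})$; by Lemma~\ref{lem:increase by 1}, $S\setminus T_{k+1}(G)$ consists exactly of $x$ (when $t(x)\le k$) and the followers of original trussness $k$. Let $R_i$ be the residual graph at iteration $i$ of stage $k$ in $G$, so $sup(e_t,R_{l(e_t)})\le k-2$ while $sup(e_t,S)\ge k-1$, and $S\setminus(R_{l(e_t)}\cup\{x\})$ contains only trussness-$k$ followers of layer strictly less than $l(e_t)$. Hence some triangle of $e_t$ in $S$ contains $x$ (giving condition $i)$, with the tie-breaking $l(e_t)>l(x)$ because otherwise $x\in R_{l(e_t)}$ and no new support appears) or contains a follower $e''$ with $t(e'')=k$ and $l(e'')<l(e_t)$, to which the induction hypothesis applies; every hinge produced this way lies in $S$ and is therefore a genuine follower. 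Separately, your justification of the side-conditions of $i)$ has the direction garbled (``$x$ can only help edges that are weaker than \ldots $x$''): the anchor can only help edges that come \emph{after} it in the order $\prec$, i.e.\ edges of strictly larger trussness or of equal trussness and strictly larger layer, precisely because for any edge removed no later than $x$ in $G$'s peeling, $x$ is still present at its removal time and contributes no additional support.
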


In this paper, we define an edge that satisfies any of the necessary conditions in Lemma~\ref{lem:ep} as a \textit{candidate follower} of an anchor edge $x$.
Thus, to find the true followers of an anchor edge $x$ (\ie $F(x,G)$), instead of performing truss decomposition on the entire graph, we only need to focus on its candidate followers.
For a candidate follower $e$ of an anchor edge $x$, if $e \in F(x,G)$, there must be $t(e) - 1$ triangles containing $e$ in $T_{t(e)+1}(G_{\{x\}})$.
However, before obtaining $F(x,G)$, it is not possible to get the exact number of triangles in $T_{t(e)+1}(G_{\{x\}})$ that contains $e$.
This is because edges in $F(x,G)$ will form new triangles with $e$ in $T_{t(e)+1}(G_{\{x\}})$.
When obtaining $F(x,G)$, each candidate follower $e$ has three status: \textbf{unchecked}, \textbf{survived} and \textbf{eliminated}.
An edge is considered \textbf{unchecked} if it has not been evaluated against the support constraint.
An edge is labeled \textbf{survived} if it passes the support check; otherwise, it is \textbf{eliminated}.
Then, we introduced a concept of the effective triangle to capture the potential triangles that can support $e$ stay in $T_{t(e)+1}(G_{\{x\}})$.

\begin{definition}[Effective triangle]\label{def:et}
    Given a triangle consisting of edges $e$, $e_1$ and $e_2$, we say this triangle is an effective triangle of $e$ if $i)$ $e_1$ and $e_2$ are not \textit{eliminated}, $ii)$ $e \prec e_1$ or $e_1$ is \textit{survived}, and $iii)$ $e \prec e_2$ or $e_2$ is \textit{survived}.
\end{definition}

Let $s^+(e)$ be the number of effective triangles of an edge $e$.
We utilize $s^+(e)$ as an upper bound for $sup(e, T_{t(e)+1}(G_{\{x\}}))$.
The following lemma establishes that a candidate follower $e$ can be disregarded if its support upper bound is insufficient.
The removal of an edge $e$ may lead to the deletion of additional edges, as described in Algorithm~\ref{alg:FindFollowers}.
Once the deletion cascade concludes, the status and support upper bound of all edges influenced by the removal of $e$ are correctly updated.


\begin{lemma}\label{lem:cf}
   A candidate follower $e$ cannot be the true follower of $x$ if $s^+(e) < t(e)-1$.
\end{lemma}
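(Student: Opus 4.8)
The plan is to prove the contrapositive: I want to show that if a candidate follower $e$ actually is a true follower of the anchor $x$ (i.e. $e \in F(x,G)$), then necessarily $s^+(e) \geq t(e)-1$. By Lemma~\ref{lem:increase by 1}, being a follower means $t^{\{x\}}(e) = t(e)+1$, so $e$ survives in the $(t(e)+1)$-truss of $G_{\{x\}}$, namely $e \in T_{t(e)+1}(G_{\{x\}})$. The defining support condition of a $k$-truss then forces $sup(e, T_{t(e)+1}(G_{\{x\}})) \geq (t(e)+1)-2 = t(e)-1$. So the entire task reduces to verifying that $s^+(e)$, the number of effective triangles of $e$, is an \emph{upper bound} on this actual support, which is exactly the claim the paragraph preceding the lemma asserts.

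The core step, therefore, is to establish the inequality $sup(e, T_{t(e)+1}(G_{\{x\}})) \leq s^+(e)$. To do this I would take any triangle $\Delta$ that certifies the support of $e$ inside $T_{t(e)+1}(G_{\{x\}})$ — that is, a triangle on edges $e, e_1, e_2$ with both $e_1, e_2 \in T_{t(e)+1}(G_{\{x\}})$ — and argue it must be counted among the effective triangles per Definition~\ref{def:et}. The key observation is that any edge $e_i$ sitting in $T_{t(e)+1}(G_{\{x\}})$ has final trussness $t^{\{x\}}(e_i) \geq t(e)+1 > t(e)$. I would then split into the two ways Definition~\ref{def:et} can be satisfied for each of $e_1, e_2$: either the original ordering already gives $e \prec e_i$, or $e_i$ must have been promoted (hence is a follower and, in the algorithm's bookkeeping, \textbf{survived}), and in either case $e_i$ is certainly not \textbf{eliminated}. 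The crux is connecting the static total order $\prec$ and the dynamic survived/eliminated labels maintained by Algorithm~\ref{alg:FindFollowers} to membership in the final truss $T_{t(e)+1}(G_{\{x\}})$.

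The main obstacle I anticipate is precisely this reconciliation between the algorithm's three-valued status and the ground-truth truss structure. An edge $e_i$ with $e_i \prec e$ (lower in the order) that contributes to $e$'s support must have been \emph{retained} in the truss even though it would ordinarily be processed before $e$; the argument needs the invariant that such a supporting edge cannot be \textbf{eliminated} without forcing $e$ itself out, which in turn would contradict $e \in F(x,G)$. I would make this rigorous by appealing to the deletion-order semantics of truss decomposition (the layering via $L_k^i$ and the relation $\prec$ introduced earlier) together with the cascade-correctness remark stating that after the deletion cascade concludes the status and support upper bounds are correctly updated. Concretely, I would argue that if some supporting edge $e_i$ were eliminated, the effective-triangle count of $e$ would drop below $t(e)-1$ during the cascade, $e$ would itself be eliminated, and hence $e \notin F(x,G)$ — contradiction. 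Assembling these pieces yields $s^+(e) \geq sup(e, T_{t(e)+1}(G_{\{x\}})) \geq t(e)-1$, contradicting the hypothesis $s^+(e) < t(e)-1$ and completing the proof.
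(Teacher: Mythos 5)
Your overall route is the one the paper itself indicates: pass to the contrapositive, use Lemma~\ref{lem:increase by 1} to place a true follower $e$ inside $T_{t(e)+1}(G_{\{x\}})$, invoke the $k$-truss support requirement to get $sup(e, T_{t(e)+1}(G_{\{x\}})) \geq t(e)-1$, and reduce everything to showing that $s^+(e)$ upper-bounds that support. (The paper defers the detailed argument to its online appendix, but its in-text justification is exactly this upper-bound claim.) You also correctly identify where the real work lies, namely reconciling the algorithm's dynamic \textbf{survived}/\textbf{eliminated} labels with membership in the final truss.

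However, the way you close that step is circular. To show that every triangle of $e$ in $T_{t(e)+1}(G_{\{x\}})$ is effective, you need that no edge of $T_{t(e)+1}(G_{\{x\}})$ is ever marked \textbf{eliminated}; your argument for this is ``if a supporting edge $e_i$ were eliminated, then $e$ would be eliminated, hence $e \notin F(x,G)$ --- contradiction.'' But the implication ``eliminated $\Rightarrow$ not a follower'' is precisely the statement of Lemma~\ref{lem:cf}, so you are assuming the lemma to prove it. The standard repair is a strong induction on the order in which the algorithm eliminates edges: consider the first edge $e$ that is eliminated despite being a true follower; at that moment every previously eliminated edge is correctly outside $T_{t(e)+1}(G_{\{x\}})$ (by the induction hypothesis for same-level eliminations and by Lemma~\ref{lem:increase by 1} for the lower-trussness edges discarded in line~6), so every triangle certifying $e$'s support in $T_{t(e)+1}(G_{\{x\}})$ satisfies condition $i)$ of Definition~\ref{def:et}; one then checks conditions $ii)$ and $iii)$ by splitting on whether the other edge has strictly larger trussness (giving $e \prec e_i$ outright) or equal trussness, in which case it is itself a follower and the min-heap's layer ordering guarantees it was popped and marked \textbf{survived} before $e$ was checked. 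That last point --- that a same-level follower with smaller layer index is necessarily processed (and hence labeled) before $e$ --- is an additional invariant of Algorithm~\ref{alg:FindFollowers} that your sketch does not supply and that the proof cannot do without.
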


\begin{algorithm}[t]
    \SetVline
    \footnotesize
    \caption{\textbf{GetFollowers}$(G, x)$}
    \label{alg:FindFollowers}
    \Input{$G$ : the graph, $x$ : the anchor edge}		
    \Output{$F$ : the follower set of $x$}
    
    \State{$x$ is set survived}
    \State{$F, H_3, H_4, \cdots, H_{k_{m}} \gets \varnothing$}
    \State{$NE \gets$ all neighbor-edges of $x$ which satisfy condition $i)$ in Lemma~\ref{lem:ep}}
    \State{\textbf{for each} $e \in NE$ \textbf{do} $H_{t(e)}.push(e)$}
    \ForEach{$i$ from $3$ to $k_{m}$}
    {
        \State{set all edges with trussness smaller than $i$ be \textit{eliminated}}
        \While{$H_i \neq \varnothing$}
        {
            \State{$e \gets H_i.pop()$}
            \State{compute $s^+(e)$}
            \If{$s^+(e) \geq t(e)-1$}      
            {
                \State{$e$ is set \textit{survived}}
                \ForEach{neighbor-edge $e'$ of $e$} 
                {
                    \If{$t(e') = i$ \AND $e \prec e'$ \AND $e' \notin H_i$}
                    {
                        \State{$H_i.push(e')$}
                    }
                }
            }
            \Else
            {
                \State{$e$ is set \textit{eliminated}}
                \State{\textbf{Retract}$(e)$}
            }
        }
        \State{put the survived edge set except the anchor into $F$}
    }
    \Return{$F$};

    \vspace{2mm}
    
    {\textbf{Function} \textbf{Retract}$(e)$\\
        \ForEach{survived edge $e'$ which is neighbor-edge of $e$}
        {
            \If{the triangle containing $e$ and $e'$ is a effective triangle of $e'$}
            {
                \State{$s^+(e') \gets s^+(e')-1$}
                \If{$s^+(e') <  t(e')-1$}
                {
                    \State{$e'$ is set \textit{eliminated}}
                    \State{\textbf{Retract}$(e')$}
                }
            }
        }
    }
\end{algorithm}

Algorithm~\ref{alg:FindFollowers} outlines the procedure for calculating the followers of an anchor edge $x$.
We begin by setting the anchor edge $x$ as \textit{survived} and initializing follower set $F$ and $k_m-2$ min-heaps to store edges (lines 1-2). 
$k_m$ is the largest trussness of the edge in the graph.
For each edge $e$ that satisfies condition $i)$ in Lemma \ref{lem:ep}, we push it into the corresponding min-heap $H_{t(e)}$ (lines 3-4).
The key of an edge in $H$ is its layer number $l(e)$.
Afterward, we perform a layer-by-layer search on each min-heap $H_i$, checking edges along the route until all heaps are empty (lines 5-18). 
All edges with $t(e)<i$ are set to \textit{eliminated} because these edges cannot increase their trussness to $i+1$ according to Lemma~\ref{lem:increase by 1} (line 6). 
When $H_i$ is non-empty, we pop a edge $e$ with minimum $l(e)$ from $H_i$ and compute $s^+(e)$ (line 8-9). 
If $s^+(e) \geq t(e)-1$, the edge $e$ is marked as \textit{survived} (lines 10-11), and we push the candidate followers in the neighbor-edges of $e$ into heap $H_{i}$ (lines 12-14). 
Otherwise, the edge $e$ is set to \textit{eliminated}, and we call function \textbf{Retract}$(e)$ to recursively delete survived edges that no longer have sufficient effective triangles (lines 16-17), which details are shown in lines 21-26.
 

\myparagraph{Complexity analysis}
We require $O(d_{max})$ to find their neighbor-edge in the worst case where $d_{max}$ is maximal value of $d_u+d_v$ of $(u,v)$. 
For each edge in the route, we need to process such operation at most 3 times: $i)$ finding route by using BFS (line 12-14). $ii)$ support check (line 9) $iii)$ retract process (line 17).
Then if given that there are $|E_r|$ edges in the route, the overall time complexity of algorithm \ref{alg:FindFollowers} is $O(3 \cdot |E_r| \cdot d_{max})$, simplified to $O(|E_r| \cdot d_{max})$.


\begin{example}
Continuing with the same graph in Fig. \ref{fig:ame}, we discuss the algorithm \ref{alg:FindFollowers} when the anchor edge is $(v_9,v_{10})$.  We first set $(v_9,v_{10})$ as a survived edge and collect edges which satisfy condition $i)$ in Lemma \ref{lem:ep}, then we have $H_3=\{(v_8,v_9)\}$ and $H_4=\{(v_8,v_{10})\}$. We first process $H_3$ by using BFS and set edges as eliminated if their trussness is smaller than $3$. After calculating the effective triangles for edge $(v_8,v_9)$. We have effective triangles $\Delta_{v_8v_9v_{10}}$ and $\Delta_{v_7v_8v_9}$ which implies $s^+(v_8,v_9)=2 \geq t(v_8,v_9)-1$. Thus we set $(v_8,v_9)$ as survived edge and continue searching the next consecutive triangle. And we find that $(v_7,v_8)$ satisfies condition $ii)$ in Lemma \ref{lem:ep}, and then push into $H_3$. Now we have $H_3=\{(v_7,v_8)\}$.  Then $s^+(v_7,v_8)=2$ and we set it as survived. Next round $H_3=\{(v_5,v_8)\}$ and $s^+(v_5,v_8)=2$. 
Finally, all dotted edges become followers and we collect them. 
After that, we empty the survived set except the anchor. 
Then we process the $H_4$ and set edges as eliminated if their trussness is smaller than $4$. The effective triangle of $(v_8,v_{10})$ are $\Delta_{v_8v_{10}v_{12}}$ and $\Delta_{v_8v_{10}v_{11}}$ thus $s^+(v_8,v_{10})=2 <t(v_8,v_{10})-1$ thus we set it as eliminated and stop searching. 
There are no followers from this route.

\end{example}

\subsection{Reducing redundancy computation} \label{sec:fct}

In the first round of the greedy algorithm, the follower set for each edge will be obtained to identify the best edge as the anchor.
Some follower results can be reused in subsequent iterations to avoid redundant computations.
In this section, we introduce a novel structure to decide whether the follower set of an edge $e$ keeps the same in the next iteration.


\begin{definition}[$k$-truss component]
Given a graph $G$, a subgraph $S$ is a $k$-truss component if $i)$ $S$ is a $k$-truss, and $ii)$ any two edges in $S$ are triangle-connected.
\end{definition}

\begin{table}[t]

    \centering
    \footnotesize

     \caption{Notations for tree structure}
     \renewcommand\arraystretch{1.2}
    \setlength\tabcolsep{3pt}
    \begin{tabular}{|c|p{7cm}|}
     \hline
        \textbf{Notation} & \textbf{Definition} \\ \hline\hline

        $Tc(T_k(G))$ &\ The $k$-truss component of $T_k(G)$ \\ 
        \hline
        
        $\mathcal{T}$ &\ The $k$-truss component tree structure \\ 
        \hline
        $\mathcal{T}[e]$ &\ The tree node that containing edge $e$\\ 
        \hline
        $TN$ &\ A tree node \\ 
        \hline
        $TN.K$  &\ The trussness value associated with tree node \\ 
        \hline
        $TN.E$ &\ The set of edges in tree node $TN$ \\ 
        \hline
        $TN.I$ &\ The smallest edge $id$ from the $TN.E$ \\ 
        \hline
        $TN.P$ &\ The parent tree node of $TN$\\
        \hline
        $TN.C$ &\ The child tree node set of $TN$\\ 
        \hline
        $sla(e)$ &\ The tree node $id$ set where $id$ $\in$ $sla(e)$ iff exist a neighbor-edge $e'$ of $e$ with $t(e’) \geq t(e)$ and $\mathcal{T}[e'].I=id$\\ 
        \hline
        $F[e][id]$ &\ The followers of edge $e$ in tree node $TN$ with $TN.I=id$\\ 
        \hline
    \end{tabular}
    \label{tab:tree}
\end{table}

For an integer $k$, 
the edges of a $k$-truss component do not form triangle connections with edges from other $k$-truss components.
Additionally, a $k$-truss component is fully contained within a single $(k-1)$-truss component.
Based on the structure relationships between different $k$-truss components, we introduce a tree structure, \ie \textit{truss component tree} ($\mathcal{T}$), to organize the edges in a graph. 
Specifically, $\mathcal{T}$ contains all edges in the graph, with each edge being assigned to a single tree node.
Given an edge $e$, $\mathcal{T}[e]$ is the tree node that containing $e$.
We then provide a clear description of the tree structure.
Let $TN$ denote a tree node. All edges in a tree node $TN$ have the same trussness value. We use $TN.K$ to represent the trussness value associated with $TN$, and $TN.E$ to denote the edge set $TN$. 
The subgraph induced by the edges in the subtree rooted at $TN$ is a $(TN.K)$-truss component.
Assume that each edge in the graph has a unique identifier, \ie $id$.
We use $TN.I$ to denote the tree node $id$, which is equal to the smallest edge $id$ in $TN.E$.
The hierarchical relationship between tree nodes in the truss component tree is captured by $TN.P$ (the parent tree node of $TN$) and $TN.C$ (the child tree nodes of $TN$).
The notation summarizing this tree structure is provided in Table \ref{tab:tree}.
Then we introduce \textit{subtree adjacency node} to capture the relationship between edges and $TN$.
Given an edge $e$ and a tree node $TN$ in $\mathcal{T}$, we say $TN$ is a subtree adjacency node of $e$ iff exists a neighbor-edge $e'$ of $e$ with $t(e') \geq t(e)$ and $e' \in TN$.
We use $sla(e)$ to store the tree node $id$ $TN.I$ of all $e$'s subtree adjacency node.

\begin{algorithm}[t]
    \SetVline
    \footnotesize
    \caption{\textbf{BuildTree}$(G,RN)$}
    \label{alg:build TCT}
    \Input{$G$ : the graph, $RN$ : a root node}		
    \Output{$\mathcal{T}$ : the truss component tree}

    \State{$G_1, G_2, \cdots, G_l \gets$ the triangle-connected subgraphs in $G$}
    \ForEach{$i$ from 1 to $l$}
    {
        \State{$k_{min} \leftarrow$ the smallest trussness of an edge in $G_i$}
        \State{$TN \gets$ an empty tree node}
        \State{$TN.K = k_{min}$; $TN.P = RN$; $RN.C = RN.C \cup \{TN\}$}
        \ForEach{$e \in E(G_i)$ with $t(e) = k_{min}$}
        {
            \State{$TN.E \leftarrow {TN.E} \cup \{e\}$}
            \State{$\mathcal{T}[e] \gets TN$}
            \State{$G_i \leftarrow G_i \backslash \{e\}$}
        }
        \State{$TN.I \leftarrow$ the smallest edge $id$ in $TN.E$}
        \State{$\mathcal{ST} \gets$ \textbf{BuildTree}$(G_i,TN)$}
        \State{$\mathcal{T} \gets {\mathcal{T}} \cup {\mathcal{ST}}$}
    } 
    \Return{$\mathcal{T}$};
\end{algorithm}

Based on the above structure, Algorithm $\ref{alg:build TCT}$ illustrates the details for constructing the truss component tree. 
The construction proceeds from root to leaf, with the input being a social network graph $G$ and an empty root node $RN$. 
Initially, we need to get all triangle-connected subgraphs of $G$ (line 1).
For each subgraph, we identify the smallest trussness value, denoted as $k_{min}$ and initialize the information relevant with $TN$ (lines 3-5). 
Next, the edge classification process then begins (lines 6-9), we assign the edges with trussness equal to $k_{min}$ to the current tree node $TN$. 
These edges are subsequently removed from the subgraph $G_i$ (line 9). 
After processing, the tree node’s $id$ is computed (line 10). 
Finally, for each remaining subgraph $G_i$, we recursively call BuildTree to construct the subtree $\mathcal{ST}$. 
This process continues until $G_i$ becomes empty (lines 11-12). After the whole graph is empty, we finish building the tree.

\myparagraph{Complexity analysis}
In each iteration, we need to compute the triangle-connected subgraphs (line 2), which involves visiting all triangles in the graph. 
Since each iteration removes edges with minimal trussness from the graph, the time complexity for this step is bounded by $O(m^{1.5})$. 
Additionally, for each subgraph, we need to visit each edge once to identify those with minimal trussness (lines 6-9), which has a time complexity of $O(m)$. The recursion depth is limited by the maximum trussness value, $k_{\text{max}}$. Hence, the total time complexity is $O(k_{\text{max}} \cdot m^{1.5})$.

\begin{example}
  Proceeding with the example in Fig. \ref{fig:ame}, at the beginning, graph $G$ is triangle-connected, and $k_{min}=3$. 
  We create a tree node $TN_1$ and add edges that trussness equal to $3$ into $TN_1$ and gather relevant information with the node. 
  Then we delete dotted edges from $G$ and recursively call this function on the remaining subgraph. 
  We have three triangle-connected subgraphs in the next iteration: subgraph induced by node $\{v_1,v_2,v_5,v_7,v_9\}$, $\{v_6,v_8,v_{10},v_{11},v_{12}\}$, $\{v_3,v_4,v_5,v_6,v_{13}\}$. 
  And the $k_{min} = 4$, we create tree node $TN_2,TN_3$ and do the same thing as the last iteration. 
  Round by round, the algorithm returns the tree until the subgraph is empty, the results of Fig. \ref{fig:eot}.
  After finishing build the tree, we have $sla((v_9,v_{10})]) = \{1,14\}$ because in $\Delta_{v_8v_9v_{10}}$, $t(v_9,v_{10}) = t(v_8,v_9)$ and $t(v_9,v_{10}) < t(v_8,v_{10})$. 
  Similarly we have $sla((v_5,v_8)) = \{1,5,14,23\}$.
\end{example}

\begin{lemma}\label{lem: ffsla}
If an edge $x$ is anchored in the graph $G$, we have $F(x) \subseteq \cup_{id \in sla(x)}$ $\mathcal{T}[id].E$.
\end{lemma}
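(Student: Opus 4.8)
The plan is to prove the containment follower-by-follower, reducing it to a statement about a single $k$-truss component. Fix any $e_t\in F(x,G)$ and write $k=t(e_t)$; by Lemma~\ref{lem:increase by 1}, $t^{\{x\}}(e_t)=k+1$. Let $C$ be the $k$-truss component of $G$ containing $e_t$, so that the node $\mathcal{T}[e_t]$ is precisely the set of trussness-$k$ edges of $C$. By the definition of $sla$, every $id\in sla(x)$ is the id $\mathcal{T}[e'].I$ of some neighbor-edge $e'$ of $x$ with $t(e')\ge t(x)$; hence $\mathcal{T}[e_t].I\in sla(x)$ as soon as $C$ contains a neighbor-edge of $x$ of trussness exactly $k$, provided also $k\ge t(x)$. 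The whole lemma therefore reduces to proving (i) $k\ge t(x)$ and (ii) $C$ contains a neighbor-edge of $x$ of trussness $k$: such an edge $e^{*}$ then lies in $\mathcal{T}[e_t].E$ and satisfies $t(e^{*})=k\ge t(x)$, so $\mathcal{T}[e_t].I=\mathcal{T}[e^{*}].I\in sla(x)$ and $e_t\in\mathcal{T}[e_t].E\subseteq\bigcup_{id\in sla(x)}\mathcal{T}[id].E$.

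For (i) I would run the decomposition of Algorithm~\ref{alg:truss-decomp} on $G$ and on $G_{\{x\}}$ simultaneously. At every level $j<t(x)$ the edge $x$ is present in both graphs (its trussness exceeds $j$), so it contributes the same triangles to the supports of all other edges; an easy induction on $j$ then shows the two runs delete exactly the same edges in the same order through level $t(x)-1$. Consequently an edge of trussness $<t(x)$ in $G$ keeps that trussness in $G_{\{x\}}$ and cannot gain, so every follower has trussness at least $t(x)$, i.e. $k\ge t(x)$.

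Claim (ii) is the core, and I would argue it by contradiction. Assume no trussness-$k$ edge of $C$ is a neighbor-edge of $x$, and put $S=\bigl(C\cap T_{k+1}(G_{\{x\}})\bigr)\setminus\{x\}$, noting $e_t\in S$. The goal is to show $S$ already satisfies the $(k+1)$-truss support condition in $G$. For a surviving trussness-$k$ edge $e\in S$, each of its $\ge k-1$ supporting triangles $\{e,e_1,e_2\}$ inside $T_{k+1}(G_{\{x\}})$ has $t^{\{x\}}(e_1),t^{\{x\}}(e_2)\ge k+1$, whence $t(e_1),t(e_2)\ge k$ by Lemma~\ref{lem:increase by 1}; thus this triangle lies in $T_k(G)$ and is triangle-connected to $e$, forcing $e_1,e_2\in C$, while the assumption forbids it from containing $x$, so $e_1,e_2\in S$. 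For an edge of $S$ of original trussness $\ge k+1$ the same $k-1$ triangles are already available inside $T_{k+1}(G)\subseteq T_{k+1}(G_{\{x\}})$, which stays in $C$ and omits $x$ (as $t(x)=k$ keeps $x$ out of $T_{k+1}(G)$). Hence every edge of $S$ has at least $k-1$ supporting triangles within $S$ using no anchor, so by maximality of the $(k+1)$-truss $S\subseteq T_{k+1}(G)$; thus every edge of $S$ has trussness $\ge k+1$, contradicting $t(e_t)=k$. Therefore some neighbor-edge of $x$ in $C$ has trussness $k$, establishing (ii).

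I expect the locality assertion buried in (ii) to be the real obstacle: that every supporting triangle of a follower stays inside the one component $C$ and never uses $x$. This is exactly where Lemma~\ref{lem:increase by 1} earns its keep, since it forces every edge of $T_{k+1}(G_{\{x\}})$ to have original trussness at least $k$, placing each supporting triangle in $T_k(G)$ and hence inside $C$. One must also be careful when $t(x)=k$ and $x\in C$; removing $x$ from $S$ and verifying that none of the counted triangles relied on $x$ is what preserves the argument there. A more mechanical alternative would invoke Lemma~\ref{lem:ep} to supply an originating neighbor-edge $e_s$ of trussness $k$ together with an upward-route $R_{e_s\rightsquigarrow e_t}$, but that approach then has to rule out the route slipping into a different $k$-truss component, a subtlety the component argument above avoids entirely.
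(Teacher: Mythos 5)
Your proof is correct. The paper relegates this proof to its online appendix, so there is nothing in the text to compare against line by line; what can be said is that your argument is self-contained and does not go through Lemma~\ref{lem:ep}, which is the route the paper's machinery most naturally suggests. You reduce the lemma to two claims --- every follower has trussness at least $t(x)$, and the $k$-truss component $C$ containing a trussness-$k$ follower must also contain a trussness-$k$ neighbor-edge of $x$ --- and the second, which is the real content, you prove by a maximality argument: if no such neighbor-edge existed, then $S=(C\cap T_{k+1}(G_{\{x\}}))\setminus\{x\}$ would already satisfy the $(k+1)$-truss support condition in $G$, because Lemma~\ref{lem:increase by 1} forces every edge of $T_{k+1}(G_{\{x\}})$ to have original trussness at least $k$, so each supporting triangle of an edge of $S$ lies in $T_k(G)$, hence inside $C$, and under the assumption cannot contain $x$; maximality of $T_{k+1}(G)$ then gives $t(e_t)\ge k+1$, a contradiction. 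This is sound, and it deliberately sidesteps the genuine subtlety you flag in the alternative route: Definition~\ref{def:route} places no constraint on the third edge of each triangle along an upward-route, so an upward-route from a neighbor-edge $e_s$ to $e_t$ does not by itself certify that $e_s$ and $e_t$ are triangle-connected within $T_k(G)$, i.e., that they land in the same tree node. Two small blemishes, neither fatal: in the second support-counting case you justify $x\notin T_{k+1}(G)$ by ``$t(x)=k$'' when you only know $t(x)\le k$ (the conclusion holds either way), and you should state explicitly that every non-anchored edge of $T_{k+1}(G_{\{x\}})$ has at least $k-1$ supporting triangles inside it because the anchored decomposition exempts only $x$ from deletion.
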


\begin{figure}[t]
	\centering
	\includegraphics[width=1\linewidth]{ 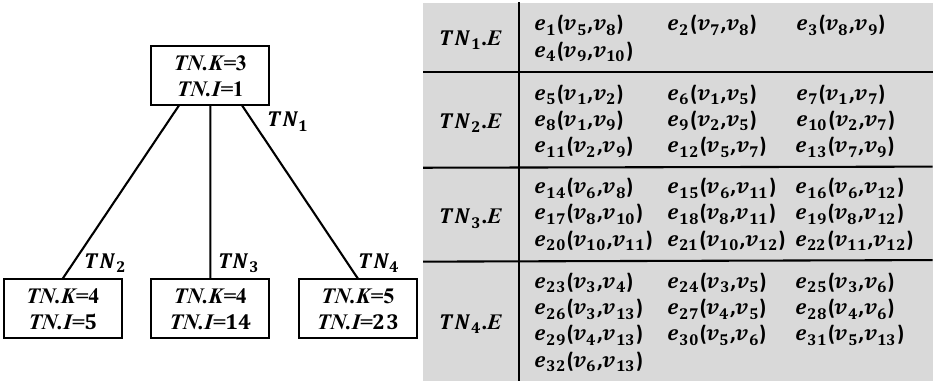}
	\caption{Example of truss component tree}
	\label{fig:eot}
\end{figure}

According to Lemma \ref{lem: ffsla}, the followers of an anchored edge $x$ can be divided into several parts based on its subtree adjacency nodes. 
Specifically, we use $F[x][id]$ to record the follower of $x$ in tree node $TN$ with $TN.I = id$, \ie $e \in F[x][id]$ iff $e \in F(x,G)$ and $\mathcal{T}[e].I = id$.
When an edge $x$ is anchored, the trussness of its followers will increase, which results in corresponding modifications to the tree structure. 
If the structure of a tree node $TN$ changes, the follower information of an edge $e$ in $TN$, \ie $F[e][TN.I]$, needs to be updated as well.
For those tree nodes that are not affected, the follower information within them can be reused in the next iteration.

Algorithm \ref{alg:FollowerReuse} shows the pseudo-code for getting the reusable information.
A set $ES$ is used to store the $id$ of the tree node whose structure may change and is initialized to $\{\mathcal{T}[x].I\}$ (line 1).
Then, we collect the $id$ of the tree node where $x$'s followers are located (lines 2-4).
After that, we perform truss decomposition on the subgraph induced by the edges in the subtree rooted at $\{\mathcal{T}[x]\}$, noting that we do not delete anchor edges (lines 5-6).
Note that all anchor edges are preserved during truss decomposition.
The subtree rooted at $\mathcal{T}[x]$ will be re-construct by Algorithm~\ref{alg:build TCT} (lines 7-9). 
After this restructuring, the followers of the anchor $x$ are merged into different tree nodes with higher $k$, leading to changes in the tree node structure. 
Therefore, we need to collect the $id$ of those newly affected tree nodes (line 11). 
Finally, we remove all expired $id$s from $rn(e)$; $id$s remaining in $rn(e)$ represent the reusable results for edge $e$, meaning that $F[e][id]$ remains unchanged for all $id \in rn(e)$.
Finally, we obtain the reusable tree node $rn(e)$ by removing the $id$s in $ES$ from the $sla(e)$ (lines 12-13). 
The $id$ in $rn(e)$ means that $e$'s follower in tree node $\mathcal{T}[id]$ keeps the same after anchoring $x$.
  

\begin{algorithm}[t]
    \footnotesize
    \SetVline
    \caption{\textbf{FollowerReuse}$(G, x, \mathcal{T})$}
    \label{alg:FollowerReuse}
    \Input{$G$ : the graph, $x$ : the anchor, $\mathcal{T}$ : the truss component tree}	
    \Output{$rn(\cdot)$ : for each $e \in G$ where $F[e][id]$ can be reused for each $id \in rn(e)$}

    \State{$ES \gets \{\mathcal{T}[x].I\}$}
    \ForEach{$id \in sla(x)$}
    {
        \If{$F[x][id] \neq \varnothing$}
        {
            \State{$ES \gets ES \cup \{id\}$}
        }
    }
    
    \State{$G' \gets$ the subgraph formed by edges within the subtree rooted at $\mathcal{T}[x]$}
    \State{\textbf{TrussDecomp($G'$)}}
    \State{$P' \gets \mathcal{T}[x].P$}
    \State{$\mathcal{T'} \gets$ \textbf{BuildTree}$(G',P')$}
    \State{$\mathcal{T^*} \gets \mathcal{T}$ with the subtree root at $P'$ replaced by $\mathcal{T'}$}  
    \State{get new $sla(e)$ from $\mathcal{T^*}$ for each $e \in G$}
    \State{$ES \gets {ES} \cup \{\mathcal{T^*}[e].I | e \in F(x,G)\}$} 
    \ForEach{$e\in G$}
    {
        \State{$rn(e) \gets sla(e) \backslash ES$}  
    } 
    \Return{$rn(\cdot)$};
\end{algorithm}

\myparagraph{Complexity analysis}
Algorithm \ref{alg:FollowerReuse} has a time complexity of $O(k_{max} \cdot m^{1.5})$
, as it relies on the tree-building process, which requires $O(k_{max}\cdot m^{1.5})$ time complexity. 
In the worst case, the entire tree needs to be re-constructed. 
Each edge is visited only once during the expiration check (lines 12-13), contributing an additional $O(m)$ complexity. 
Furthermore, the size of $sla(x)$ is bounded by $m$ (lines 2-4). 
Consequently, the overall time complexity is $O(k_{max}\cdot m^{1.5})$, equivalent to that of the tree construction process.

\begin{lemma}\label{lem:rn}
After anchoring $x$, for each non-anchored edge $e$, $F[e][id]$ remains unchanged if $id \in rn(e)$.
\end{lemma}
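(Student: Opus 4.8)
The plan is to compare, for a fixed non-anchored candidate edge $e$ and a fixed $id \in rn(e)$, the set $F[e][id]$ computed on $G$ (before anchoring $x$) against the one computed on $G_{\{x\}}$ (after), and to show they coincide. Since $id \in rn(e) = sla(e)\setminus ES$, the node $\mathcal{T}[id]$ is relevant to $e$ (it carries a neighbor-edge of $e$ of trussness at least $t(e)$) and, as we will argue, is left structurally untouched by anchoring $x$. The first ingredient is a \emph{locality} observation: the followers of $e$ that land in one tree node all lie at the single trussness value $TN.K$ and are reached through upward-routes; since every edge of an upward-route shares the same trussness and is triangle-connected, an entire such route is contained in one tree node. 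Hence $F[e][id]$ can be produced by running the procedure of Algorithm~\ref{alg:FindFollowers} \emph{restricted} to $\mathcal{T}[id]$, and its value is determined by only three pieces of data: (a) the edges of $\mathcal{T}[id]$ together with their layer indices $l(\cdot)$ and internal triangle-connectivity; (b) the effective-triangle support that edges of $\mathcal{T}[id]$ receive from higher-trussness edges (those sitting in descendant nodes) and from anchors; and (c) which neighbor-edges of $e$ lie in $\mathcal{T}[id]$, i.e.\ the entry points of the routes.

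First I would pin down which tree nodes can possibly change. By Lemma~\ref{lem:increase by 1} the only edges whose trussness is altered by anchoring $x$ are the members of $F(x,G)$, and each increases by exactly one. The set $ES$ built by Algorithm~\ref{alg:FollowerReuse} collects $\mathcal{T}[x].I$ (line~1), every node that \emph{originally} held a follower of $x$ (lines~2--4, via $F[x][id]\neq\varnothing$), and every node that \emph{newly} holds a follower of $x$ after the re-decomposition (line~11, via $\{\mathcal{T}^{*}[g].I : g\in F(x,G)\}$). I would then show that any node whose id is absent from $ES$ is rebuilt by Algorithm~\ref{alg:build TCT} with exactly the same edge set, the same layering, and the same internal triangle-connectivity: no edge entered it (an entering edge would be a promoted follower, hence its new node is in $ES$) and no edge left it (a leaving edge would be a promoted follower, hence its old node is in $ES$). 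This fixes data (a) and (c) for every $id\notin ES$.

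The heart of the argument --- and the step I expect to be the main obstacle --- is data (b): proving that the effective-triangle support received by an edge $f\in\mathcal{T}[id]$ is identical in $G$ and in $G_{\{x\}}$ whenever $id\notin ES$. Following Definition~\ref{def:et}, a triangle contributes to $s^+(f)$ at level $t(f)+1$ only through edges that survive to that level: higher-trussness edges (always), fellow promoted edges at level $t(f)$, and anchored edges. I would run a case analysis on a triangle whose contribution differs between the two graphs and show its ``offending'' edge $g$ must sit in an $ES$-node, contradicting $id\notin ES$: if $g$ is a promoted follower of $x$ occupying level $t(f)$ in either graph, then being triangle-connected with $f$ at the same trussness it shares the node $\mathcal{T}[id]$ with $f$, so its old or new occurrence (lines~3--4 or line~11) forces $id\in ES$. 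The delicate sub-case is the previously anchored edge $x$ itself: lying at trussness below $t(f)$, it is eliminated in $G$ but persists with infinite support in $G_{\{x\}}$, so it could in principle add a triangle $\Delta_{fxh}$ (with $h$ of higher trussness) to $s^+(f)$. I would rule this out by a marginality argument: because $f$ sits at trussness exactly $t(f)$ its ``native'' support at level $t(f)+1$ is at most $t(f)-2$, so if $x$'s single triangle were the support that lifts $f$ across the threshold then $f$ would already be promoted by $x$ alone, i.e.\ $f\in F(x,G)$, forcing $id\in ES$; the remaining possibility, that $x$ contributes \emph{together} with $e$, I would handle by pushing the same reasoning along the upward-route so that any edge supplying the additional triangle is again confined to $\mathcal{T}[id]$ or to a node already in $ES$ (here Lemma~\ref{lem: ffsla}, localizing $F(x,G)$ inside $\cup_{id\in sla(x)}\mathcal{T}[id].E$, keeps the case analysis finite).

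With data (a), (b) and (c) shown invariant for every $id\in rn(e)$, the restricted run of Algorithm~\ref{alg:FindFollowers} on $\mathcal{T}[id]$ receives identical inputs before and after anchoring $x$ and is deterministic, so it returns the same set; therefore $F[e][id]$ is unchanged, which is exactly Lemma~\ref{lem:rn}, and the reuse in lines~12--13 of Algorithm~\ref{alg:FollowerReuse} is justified.
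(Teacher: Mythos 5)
The paper itself does not contain a proof of Lemma~\ref{lem:rn} --- it defers all lemma proofs to an online appendix --- so there is nothing to compare your argument against; I can only assess it on its own terms. Your overall architecture (reduce $F[e][id]$ to a deterministic run of Algorithm~\ref{alg:FindFollowers} restricted to $\mathcal{T}[id]$, then show its three inputs are invariant for $id\notin ES$) is sensible, and parts (a) and (c) are handled adequately: any edge entering or leaving a node is a promoted follower of $x$, so the affected node's old or new id lands in $ES$ via lines~2--4 or line~11 of Algorithm~\ref{alg:FollowerReuse}.

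The genuine gap is exactly the sub-case you flag and then defer: the persistent anchor $x$ feeding effective triangles into a node \emph{not} in $ES$. Your marginality argument only disposes of the situation where $x$'s extra triangle alone promotes $f$ (then $f\in F(x,G)$ and $id\in ES$). It does not cover the combined case: take $f\in\mathcal{T}[id]$ with $t(f)=k$, native support $k-3$ at level $k+1$, a triangle $\Delta$ containing $f$, $x$, and an edge $h$ with $t(h)>k$, where $t(x)<k$. Anchoring $x$ alone gives $f$ support $k-2<k-1$, so $f\notin F(x,G)$ and nothing forces $id\in ES$. Anchoring $e$ alone (in $G$) promotes a neighbor $g$ of $f$ inside $\mathcal{T}[id]$, again giving $f$ only $k-2$: $f\notin F[e][id]$ before. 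But in $G_{\{x\}}$ the retained edge $x$ makes $\Delta$ effective, so anchoring $e$ now yields $k-1$ effective triangles and $f\in F[e][id]$ after --- the set changes although $id\in rn(e)$. Your closing remark that ``any edge supplying the additional triangle is confined to $\mathcal{T}[id]$ or to a node already in $ES$'' is precisely what fails here: the decisive edge is $x$ itself, which sits at lower trussness in $\mathcal{T}[x]\neq\mathcal{T}[id]$, and Lemma~\ref{lem: ffsla} localizes only the \emph{followers} of $x$, not the reach of $x$'s infinite support. To close the proof you would need either to show such configurations cannot occur (which I do not see how to do from the stated definitions), or to argue that $ES$ is in fact enlarged to include every $id\in sla(x)$ regardless of whether $F[x][id]$ is empty --- but that contradicts the conditional on line~3 of Algorithm~\ref{alg:FollowerReuse}. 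As written, the argument is incomplete at its most critical step.
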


\subsection{GAS algorithm} \label{sec:gasa}

Our final greedy algorithm assembles all the above techniques. First, we construct the truss component tree. Then, we identify the edge with most followers in each iteration and utilize the reused information to avoid redundant computation in the next iterations.

\begin{algorithm}[t]
    \footnotesize
    \SetVline
    \caption{\textbf{GAS($G,b$)}}
    \label{alg:GreedyAnchorSelection}
    \Input{$G$ : the graph, $b$ : the budget}	
    \Output{$A$ : the set of anchor edges}

    \State {$A \gets \varnothing$}
    \State{get $t(e)$ and $l(e)$ for each edge $e \in G$ by Algorithm \ref{alg:truss-decomp}}
    \State{$\mathcal{T} \gets $ \textbf{BuildTree}$(G, \varnothing)$}
    \State{\textbf{for each} $e \in G$ \textbf{do} $rn(e) \gets \varnothing$}
    \While{$|A| < b$}
    {   
        \State{$e^* = null$; $Max = 0$}
        \ForEach{$e \in G \backslash A$}
        {
            \ForEach{$id \in sla(e) \backslash rn(e)$}
            {
                \State{$F[e][id] \gets \textbf{GetFollowers}(G, e)$}
                \If{$|F(e,G)| > Max$}
                {
                    \State{$e^* = e$; $Max = |F(e,G)|$}
                }
            }
        }
        \State{$A \gets  A \cup \{e^*\}$;  $sup(e^*, G) = + \infty$}
        \State{$rn(\cdot) \gets$ \textbf{FollowerReuse}$(G,x,\mathcal{T})$}
    }
    \Return{$A$};
\end{algorithm}

Algorithm \ref{alg:GreedyAnchorSelection} shows the details of our \texttt{GAS} algorithm. 
First, we initialize an empty anchor set $A$, and compute the trussness $t(e)$ and layer $l(e)$ for each edge using the truss decomposition algorithm (lines 1-2). 
The tree structure is constructed by Algorithm \ref{alg:build TCT} (line 3). 
Then, we perform $b$ rounds to obtain the anchor set $A$ (lines 5-13). 
$e^*$ and $Max$ are used to record the current selected best edge in each iteration and the corresponding number of followers (line 6). 
We compute the followers for each non-anchored edge (lines 7-11).
Note that, we only need to recompute the followers of each edge on the non-reusable tree node $sla(e)\backslash rn(e)$ by a variant of Algorithm \ref{alg:FindFollowers} (line 9), which is equipped with reuse technique.
The only difference is that we ignore routes in reusable tree nodes. 
Specifically, in line 4 of Algorithm \ref{alg:FindFollowers}, we only push $e \in ES$ into $H_{t(e)}$ iff $e \in TN.E$ and $TN.I \notin rn(x)$, \ie the result of $F[x][id]$ is reusable.
After the follower computation of the current iteration, we select the edge $e^*$ with the maximum number of followers as the anchor and set its support $sup(e^*,G)$ to be positive infinity (line 12). 
We invoke Algorithm \ref{alg:FollowerReuse} to restructure the tree structure and update the reusable results for the next iteration (line 13). 
After $b$ iterations, the algorithm returns the anchor set $A$ (line 14).

\begin{table*}[t]
\centering
\footnotesize
\caption{Statistics of datasets and algorithm evaluation with default value} \label{table:datasets}
\renewcommand\arraystretch{1}
\setlength{\tabcolsep}{1.5mm}
\setlength{\extrarowheight}{2pt}
\resizebox{0.9\textwidth}{!}{
    \begin{tabular}{|l|c|c|c|c||c|c|c|c||c|c|c|}
        \hline
        \multirow{2}{*}{\textbf{Dataset}} & \multirow{2}{*}{\textbf{Vertices}} & \multirow{2}{*}{\textbf{Edges}} & \multirow{2}{*}{\textbf{$k_{max}$}} & \multirow{2}{*}{\textbf{$sup_{max}$}} & \multicolumn{4}{c||}{\textbf{Trussness gain}}  & \multicolumn{3}{c|}{\textbf{Running time (seconds)}}  \\ \cline{6-12} 
        &  &  &  &  & \texttt{Rand} & \texttt{Sup} & \texttt{Tur}& \texttt{GAS} & \texttt{BASE} & \texttt{BASE+} & \texttt{GAS} \\ \hline
        
        \textbf{\underline{Col}lege}&1,899 & 13,838& 7 & 74  & 111 & 134 & 184 & \textbf{769} & 98547.74 & 88.91 &  \textbf{76.60} \\   \hline
        \textbf{\underline{Fac}ebook} & 4,039 & 88,234& 97 & 293   & 8,891 & 525 & 9,948 & \textbf{21,980} & - & 17788.76 &  \textbf{3122.52} \\   \hline
        \textbf{\underline{Bri}ghtkite} & 58,228 & 214,078 & 43 & 272  & 1271 &  235 & 1,526 & \textbf{6,163} & - & 3388.98 & \textbf{1054.22} \\   \hline
        \textbf{\underline{Gow}alla} & 196,591 & 950,327 & 29 & 1297  & 577 & 769 & 1,042 & \textbf{11,492}  & - & 24414.38 &  \textbf{6732.54} \\ \hline
        \textbf{\underline{You}tube} & 1,134,890 & 2,987,624 & 19 & 4034  & 358 & 823 & 1,611 & \textbf{10,281} & - & 62391.04 & \textbf{22550.14} \\ \hline
        \textbf{\underline{Goo}gle}& 875,713 & 4,322,051 & 44 & 3086  & 91 & 95 & 147 &\textbf{5,640} & - & 76856.74 &  \textbf{15714.23} \\ \hline
        \textbf{\underline{Pat}ents} & 3,774,768 &16,518,947 & 36 & 591  & 59 & 37 & 146 &\textbf{10,870} & - & 194103.18 &  \textbf{70802.71} \\ \hline
        \textbf{\underline{Pok}ec} & 1,632,803 &22,301,964 & 29 & 5566  & 302 & 436 &  809  & \textbf{28,208} & - & - & \textbf{210571.13}  \\ \hline
    \end{tabular}
}
\end{table*}

\myparagraph{Complexity analysis}
In line 3 of Algorithm \ref{alg:GreedyAnchorSelection}, we need $O(k_{max} \cdot m^{1.5})$ time complexity. In lines 7-11, we need to compute the followers of each edge. 
Given the average degree $d_{ave}$ and average route size $E_{ave}$, we need $O(m \cdot E_{ave} \cdot d_{ave})$ to get all results. 
And we need $b$ rounds to get the anchor set. 
Each round, we also need to decide the reusable results, which need $O(k_{max} \cdot m^{1.5})$ time complexity. 
In total, the computational complexity is $O(b \cdot (k_{max} \cdot m^{1.5}+m \cdot E_{ave} \cdot d_{ave}))$.

\section{Experiments}
\label{sec:exp}



\subsection{Experiment setup}
\label{sec:setup}

\myparagraph{Algorithms}
To the best of our knowledge, there is no existing work for our problem.
Towards the effectiveness, we compare our greedy algorithm (\texttt{GAS}) with four algorithms (\texttt{Exact}, \texttt{Rand}, \texttt{Sup}, and \texttt{Tur}). We also implement and evaluate three algorithms (\texttt{BASE}, \texttt{BASE+}, and \texttt{GAS}) to verify the performance of each proposed technique. A concise overview of all algorithms is as follows:
\begin{itemize}[leftmargin=1em]
    \item \texttt{Exact}: identify the optimal anchor set by exhaustively checking all possible combinations of $b$ edges. 
    
    \item \texttt{Rand}: randomly chooses $b$ anchors from $G$.
    
    \item \texttt{Sup}: randomly chooses the $b$ anchors from $G$ with top 20\% highest support.
    
    \item \texttt{Tur}: randomly choose the $b$ anchors from $G$ with top 20\% upward route size.
    
    \item \texttt{BASE}: the baseline method proposed in Section \ref{sec:base}.
    
    \item \texttt{BASE+}: \texttt{BASE} algorithm equipped with upward route to get followers.
    
    \item \texttt{GAS}: Algorithm \ref{alg:GreedyAnchorSelection} developed in Section~\ref{sec:gasa}.

    \item \texttt{AKT}: The method of anchoring vertices to enlarge corresponding $k$-truss in~\cite{zhangfanefficiently2018}.
\end{itemize}

\myparagraph{Datasets} We use eight real-world datasets in our experiments, with details provided in Table \ref{table:datasets}, which are
listed in increasing order of their edge numbers, where $k_{max}$ and $sup_{max}$ are the maximal trussness value and support respectively. 
All datasets can be found on SNAP\footnote{\url{http://snap.stanford.edu}}.



\myparagraph{Parameters and workload}
We conduct experiments by varying the anchor set size $b$ from 20 to 100, with 100 as the default value. 
All programs are developed using standard C++.
The experiments are conducted on a machine with an Intel(R) Xeon(R) 5218R 2.10GHz CPU and 256 GB memory.

\subsection{Experimental results}

\myparagraph{Exp-1: Evaluation of various algorithms on all datasets} Table \ref{table:datasets} summarizes the effectiveness and efficiency of different algorithms on all datasets with default budget $b=100$. 
``-'' means the algorithm cannot finish in three days. 
Our algorithm achieves the highest trussness gain compared to others, within an acceptable runtime. 
For the three random algorithms, \texttt{Rand}, \texttt{Sup} and \texttt{Tur}, the anchor set is chosen randomly 2000 times, and the maximum trussness gain achieved is reported. 
However, these methods yield small trussness gain because anchoring most edges results in no improvement, and the large search space makes it difficult to identify promising anchor sets.
In the efficiency experiments, \texttt{BASE} is only able to return results on the College dataset due to its high time complexity of $O(b \cdot m^{2.5})$. 
With the incorporation of the upward-route and support check processes, \texttt{BASE+} computes the trussness gain for each anchor more efficiently, as it avoids performing truss decomposition on the entire graph.
However, \texttt{BASE+} recomputes the results for each anchor in every round, leading to significant time overhead, especially on large datasets.
By leveraging the result reuse technique (Section~\ref{sec:afc}), \texttt{GAS} avoids unnecessary recomputation, resulting in faster execution. Notably, the runtime of \texttt{GAS} is
approximately 20\% of that of \texttt{BASE+} on facebook and google.

\myparagraph{Exp-2: Comparison with \texttt{Exact} algorithm} We perform a comparative study of the \texttt{Exact} algorithm and the \texttt{GAS} algorithm.
The \texttt{Exact} algorithm exhaustively enumerates all possibilities and select the optimal solution within the budget $b$. However, due to the prohibitively high computational time required by the \texttt{Exact} algorithm, We iteratively extract a vertex and its neighbors to form smaller datasets, stopping when the number of extracted edges approaches 150-250,
following the method proposed in \cite{QingyuanGobal2020}. For budgets ranging from 1 to 3, we record the average running time and the average trussness gain, with the results presented in Fig. \ref{fig: cmp-exact}. The result indicates that the effectiveness of \texttt{GAS} is at least $90\%$ of \texttt{Exact} when the budget does not exceed 3. 
Notably, the trussness gain percentage of \texttt{GAS} compared to \texttt{Exact} may increase as the budget $b$ increases.

\begin{figure}[t]
    \centering
    \subfigure[Facebook]{\includegraphics[width=0.4\linewidth]{ 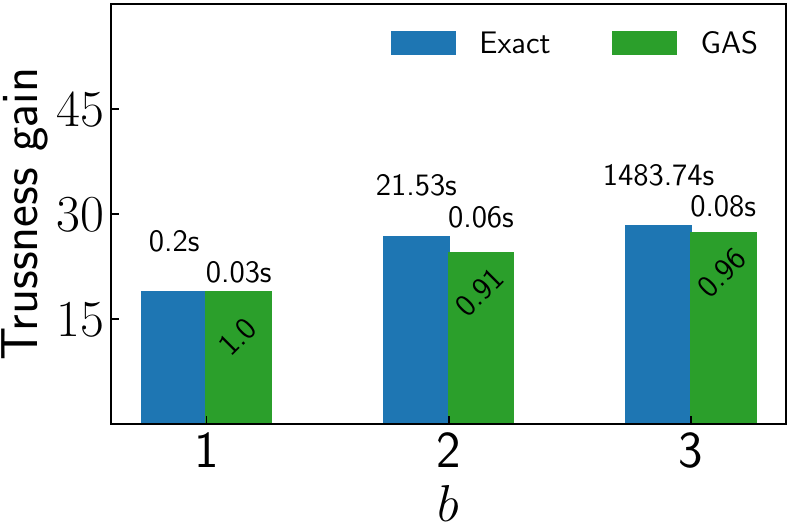}}
    \hspace{5mm}
    \subfigure[Brightkite]{\includegraphics[width=0.4\linewidth]{ 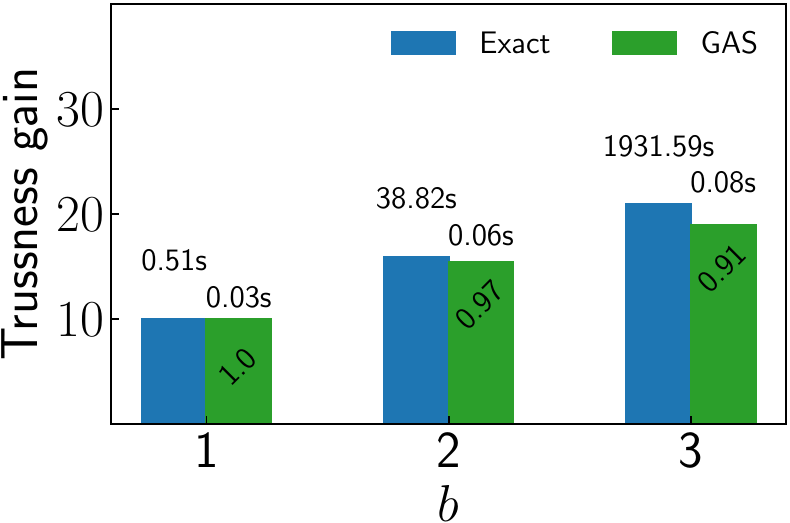}}
    \caption{\texttt{GAS} v.s. \texttt{Exact}}
    \label{fig: cmp-exact}
\end{figure}

\begin{figure}[t]
    \centering
    \subfigure[Facebook]{\includegraphics[width=0.4\linewidth]{ 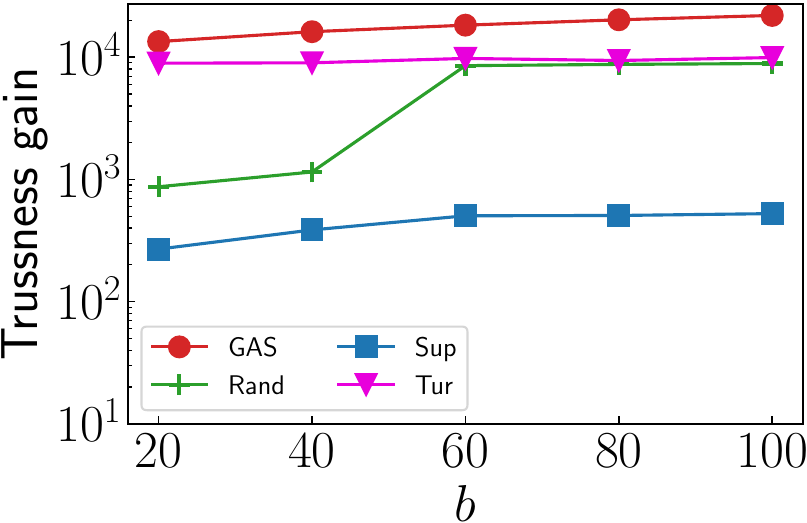}}
    \hspace{5mm}
    \subfigure[Brightkite]{\includegraphics[width=0.4\linewidth]{ 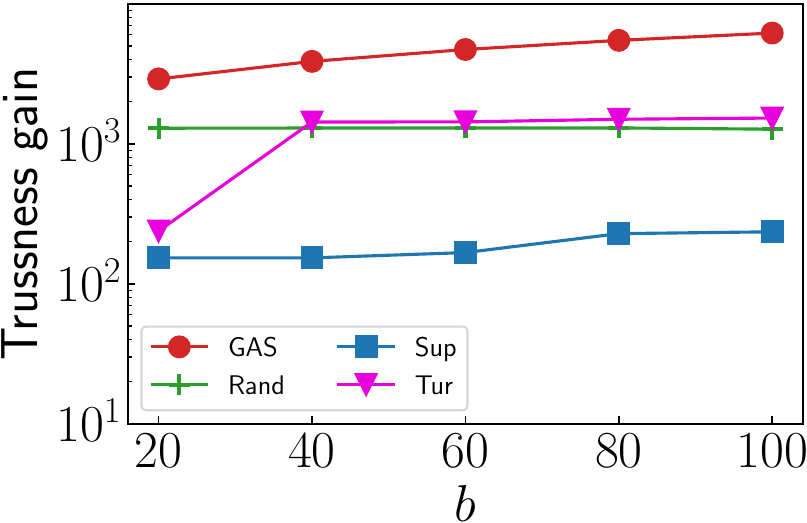}}
    \caption{Effectiveness evaluation by varying $b$}
    \label{fig:TGVB}
\end{figure}

\begin{figure}[t]
    \centering
     \subfigure[\texttt{GAS}]{\includegraphics[width=0.3\linewidth]{ 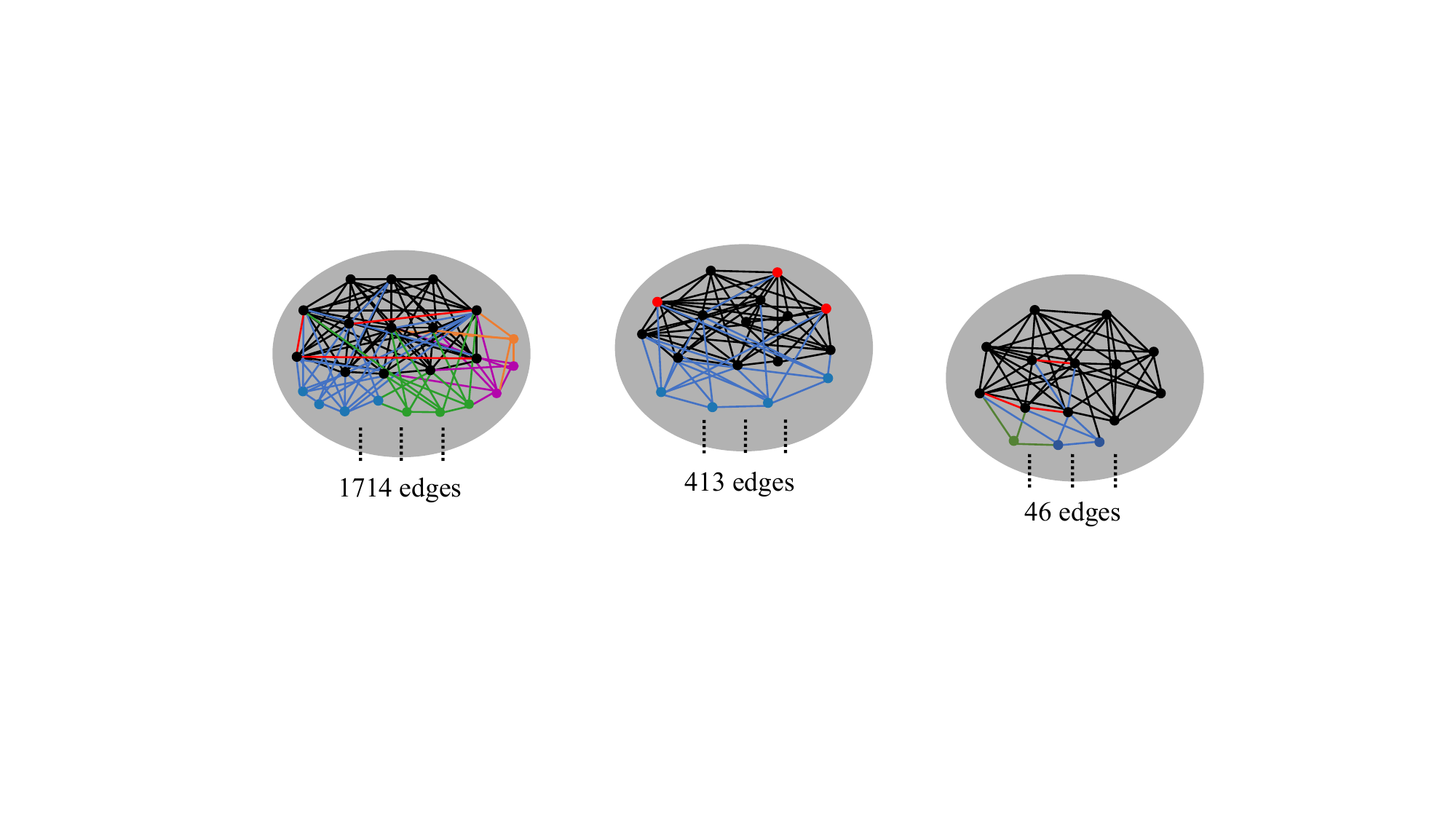}}
     \subfigure[\texttt{AKT}]{\includegraphics[width=0.3\linewidth]{ 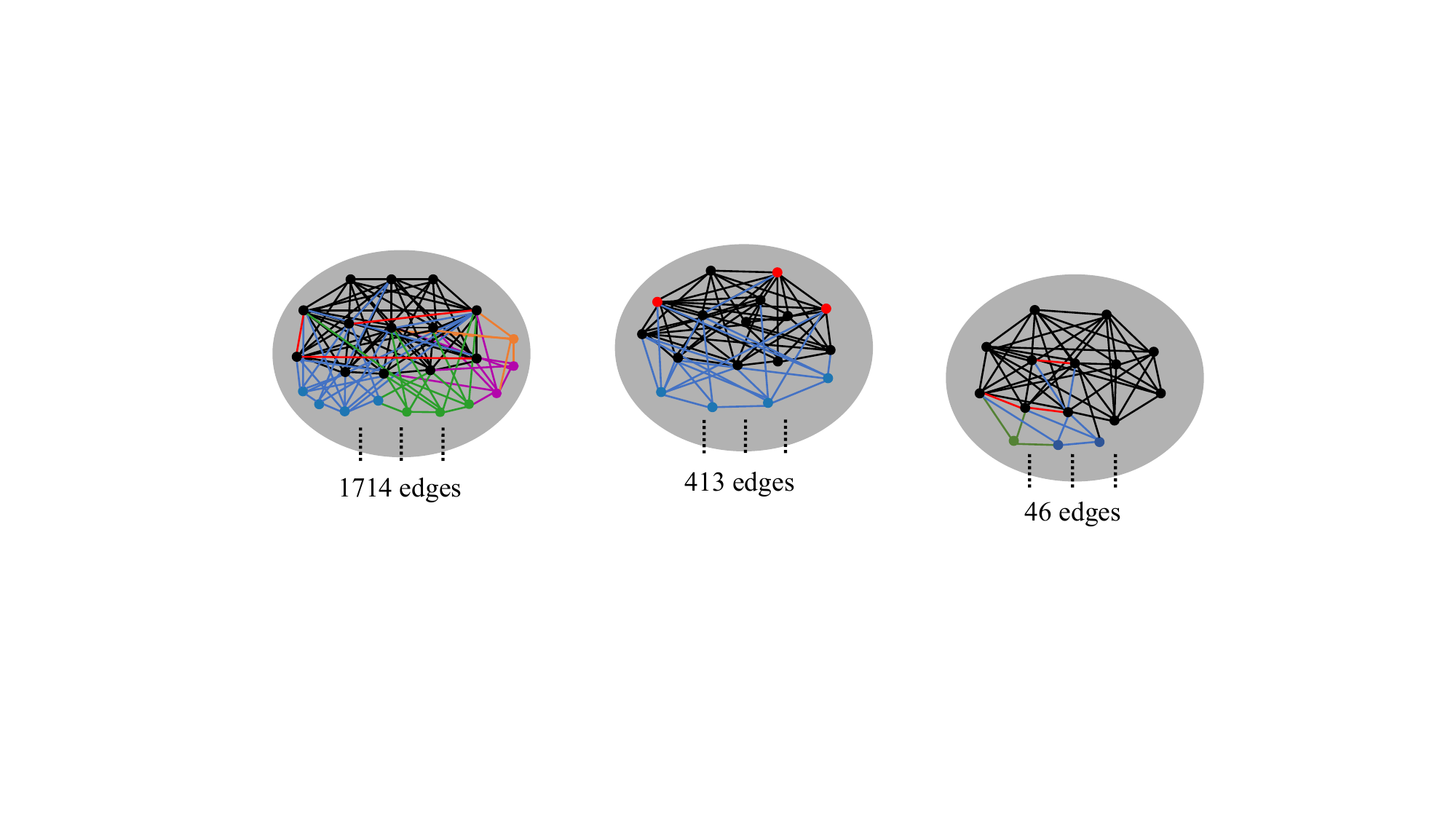}}
     \subfigure[\texttt{Edge-deletion}]{\includegraphics[width=0.3\linewidth]{ 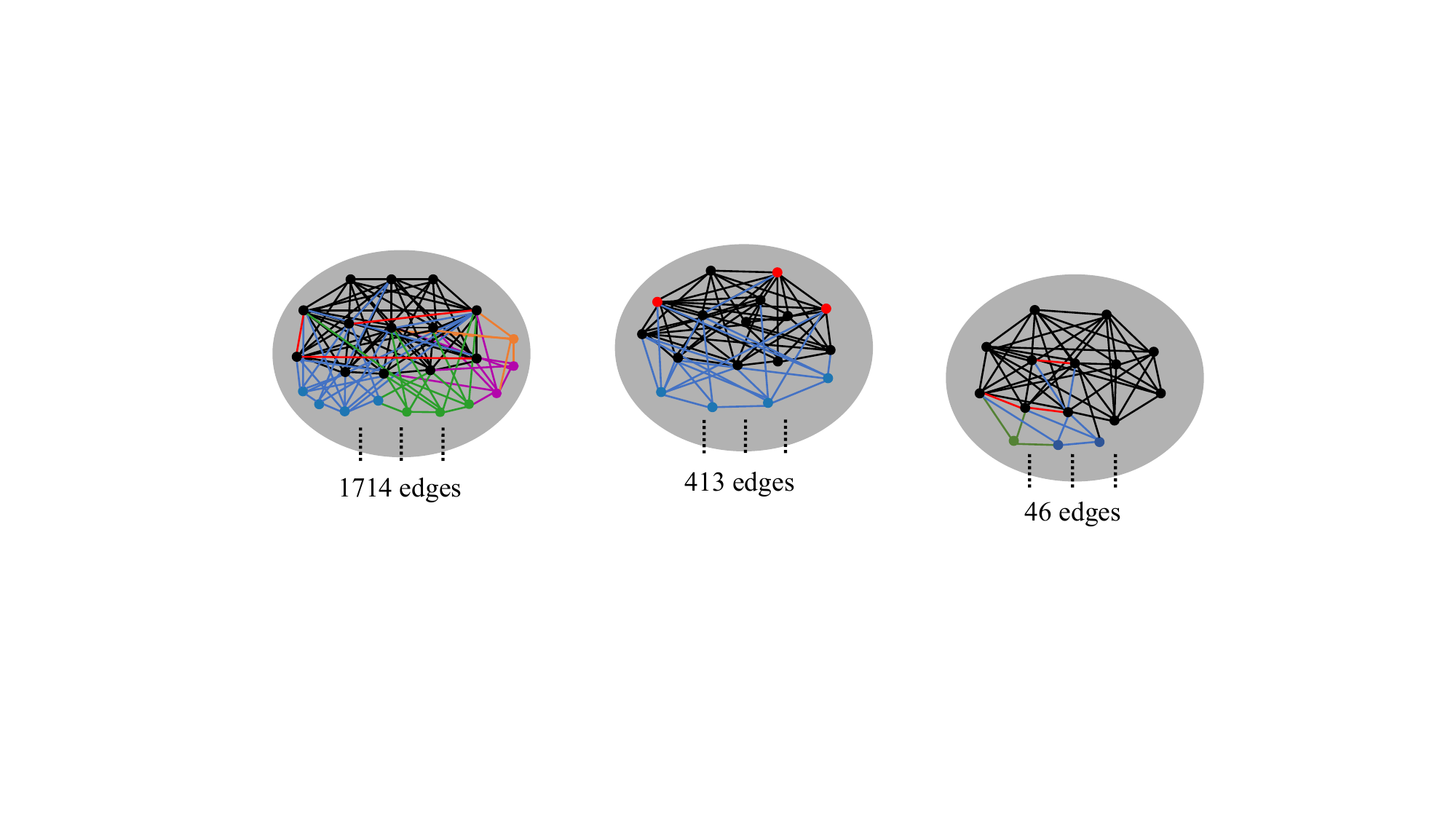}}
    \caption{Case study on Gowalla}
    \label{fig:cs}
\end{figure}

\begin{figure}[t]
    \centering
    \subfigure[College]{\includegraphics[width=0.4\linewidth]{ 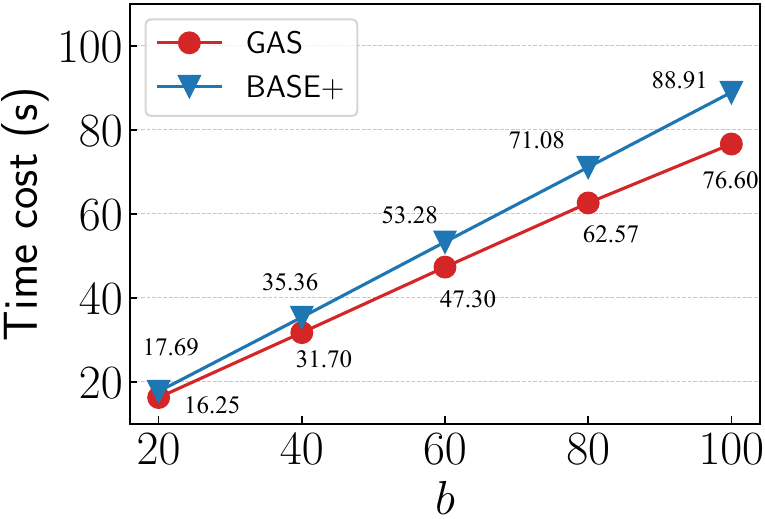}}
    \hspace{5mm}
    \subfigure[Facebook]{\includegraphics[width=0.4\linewidth]{ 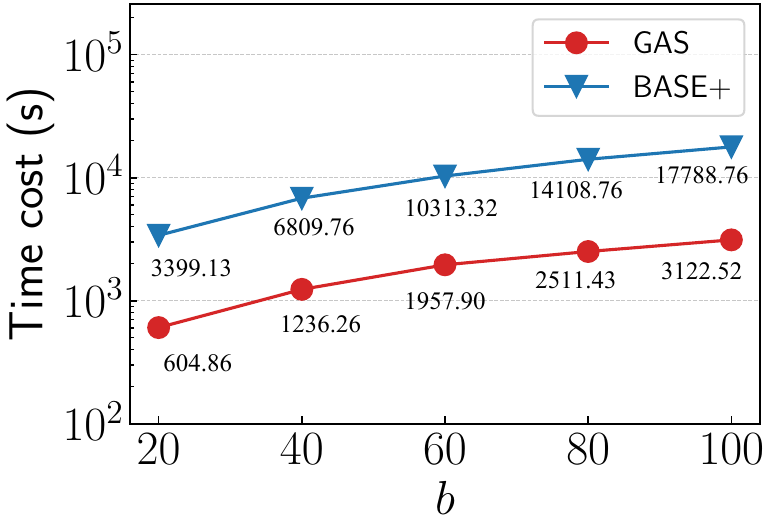}}
    \subfigure[Brightkite]{\includegraphics[width=0.4\linewidth]{ 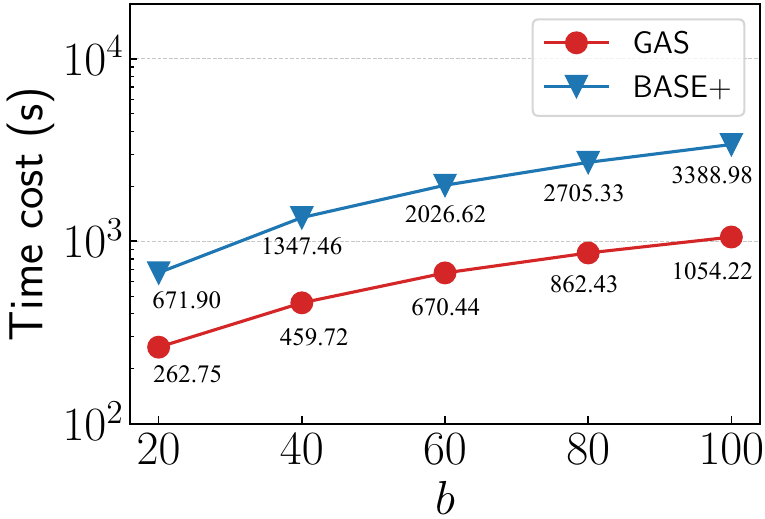}}
    \hspace{5mm}
    \subfigure[Gowalla]{\includegraphics[width=0.4\linewidth]{ 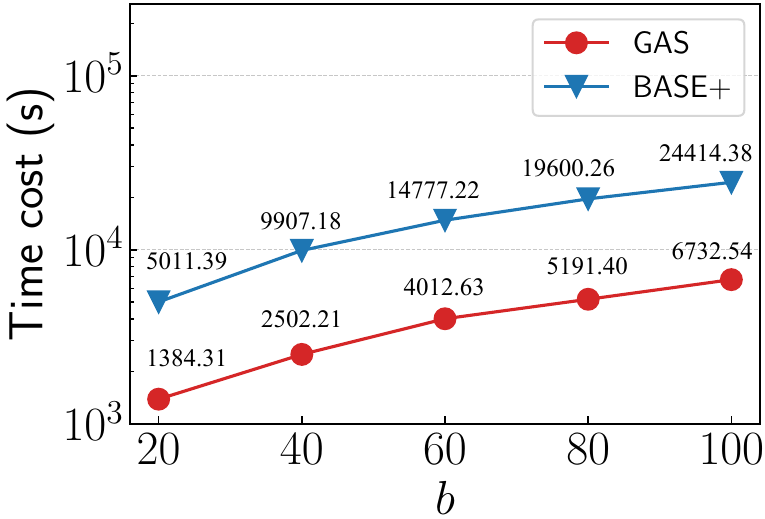}}
    \subfigure[Youtube]{\includegraphics[width=0.4\linewidth]{ 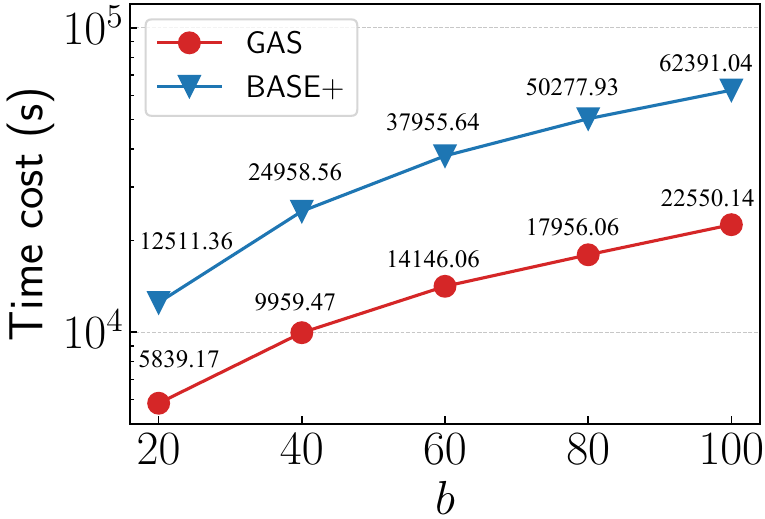}}
    \hspace{5mm}
    \subfigure[Google]{\includegraphics[width=0.4\linewidth]{ 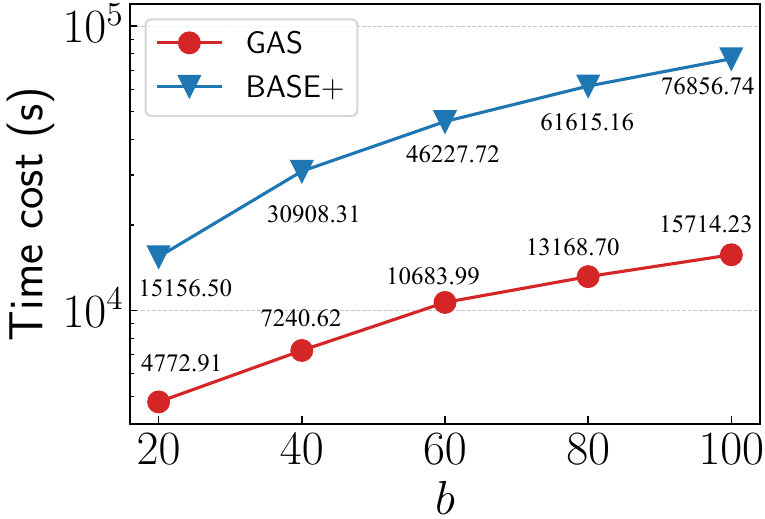}}
    \subfigure[Patents]{\includegraphics[width=0.4\linewidth]{ 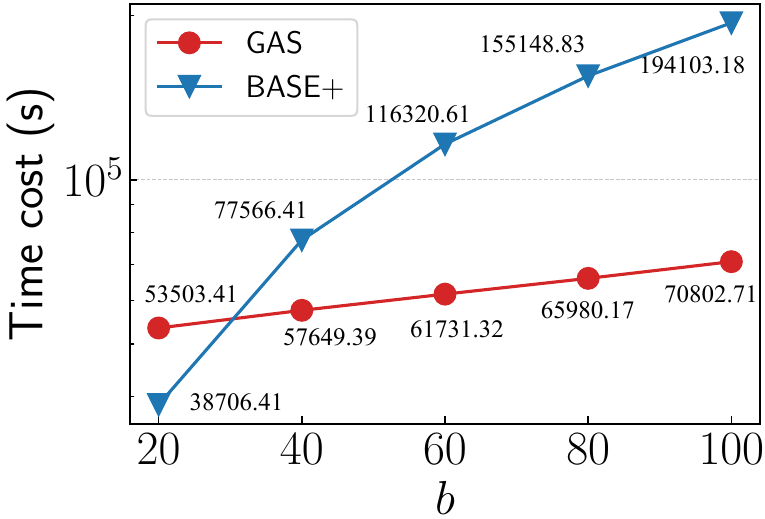}}
    \hspace{5mm}
    \subfigure[Pokec]{\includegraphics[width=0.4\linewidth]{ 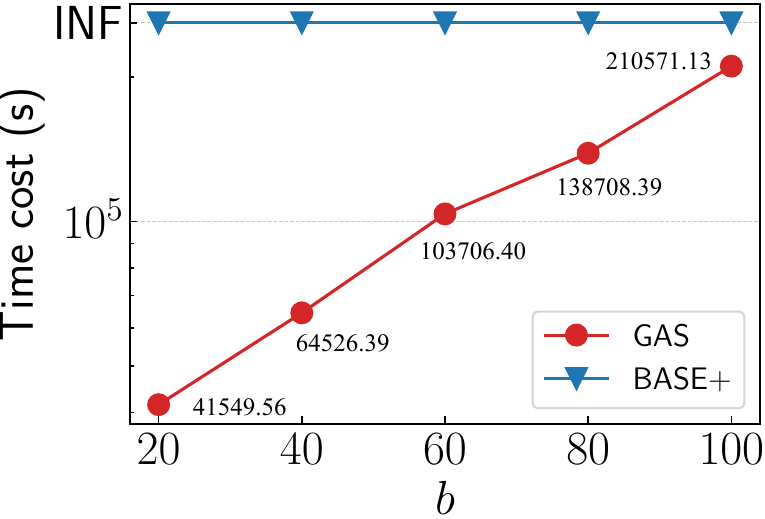}}
    \caption{Efficiency evaluation by varying $b$}
    \label{fig:tvb}
\end{figure}

\myparagraph{Exp-3: Effectiveness evaluation by varying $b$}
In this experiment, we evaluate the trussness gain of the proposed \texttt{GAS} algorithm compared to three random algorithms, \texttt{Rand}, \texttt{Tur}, and \texttt{Sup}, by varying the budget $b$ on Facebook and Brightkite datasets. The results can be found in Fig. \ref{fig:TGVB}. For the three random algorithms, we conducted 2000 independent runs for each budget and reported the maximum achieved trussness gain. The results demonstrate that the \texttt{GAS} consistently outperforms all three random algorithms across all parameter settings, highlighting the effectiveness of our greedy selection strategy. Among the three random algorithms, \texttt{Tur} achieves the best performance, as it selects anchor edges based on the upward-route size, which tends to prioritize edges with a higher potential to acquire more followers. Additionally, \texttt{Rand} outperforms \texttt{Sup}, primarily due to its selection strategy. Specifically, \texttt{Rand} randomly selects anchor edges from the entire graph, whereas \texttt{Sup} restricts the selection to the top 20\% of edges ranked by support. Since edges with high support typically have high trussness, anchoring such edges only benefits other high-trussness edges, while having no impact on lower-trussness edges. In contrast, randomly selecting edges throughout the graph can select anchored edges that have a broader impact on different trussness levels, leading to greater overall trussness gains.


\myparagraph{Exp-4: Case study}
To assess the performance of the proposed \texttt{GAS} algorithm, we conduct a case study comparing it with \texttt{AKT} \cite{zhangfanefficiently2018} and edge-deletion methods. Specifically, \texttt{AKT} selects anchor vertices based on the anchor $k$-truss approach, while the edge-deletion method selects anchor edges as those whose removal leads to the maximum reduction in global trussness. Fig. \ref{fig:cs} shows a case study on the Gowalla dataset with $b=3$, illustrating the trussness gain achieved by these three methods. Since \texttt{AKT} operates on a specific $k$-truss, Fig. \ref{fig:cs} reports the results for the $k$-truss that yields the highest trussness gain. 
In Fig. \ref{fig:cs}, red edges or vertices represent anchors, black edges remain unchanged in trussness, and edges with different colors (except black) indicate trussness increments at different levels. The numbers below the figure denote the count of edges whose trussness has increased. Given the large number of edges in the original graph, directly visualizing the full network structure would obscure the impact of anchoring; thus, Fig. \ref{fig:cs} focuses on highlighting the differences among the three methods. 
From the figure, it is evident that \texttt{GAS} achieves the highest trussness gain compared to the other two methods, enhancing edges across a wider range of trussness levels. In contrast, \texttt{AKT} focuses on a specific $k$-truss subgraph, leading to a lower trussness gain and affecting only edges with trussness equal to $k-1$. On the other hand, edge-deletion selects anchor edges based on their removal impact on global trussness, naturally prioritizing edges with higher trussness. However, since an anchor edge only increases the trussness of edges with an even higher trussness value, the edge-deletion method is less effective in improving global trussness. In summary, these results demonstrate the superiority of the \texttt{GAS} algorithm, as it achieves a significantly higher trussness gain and improves the trussness of edges at various levels across the entire graph.

\myparagraph{Exp-5: Efficiency evaluation by varying $b$} In this experiment, we analyze how the algorithms' runtime varies with the budget size in Fig. \ref{fig:tvb}.
Specifically, we compare the performance of \texttt{BASE+} and \texttt{GAS}. 
Since the \texttt{BASE} fail to return answers within three days on most settings, we excluded it from the results. 
For \texttt{GAS}, constructing the classification tree to facilitate result reuse incurs an initial overhead. Consequently, \texttt{GAS} runs slower at the beginning on the Patents dataset.
However, this initial investment proves to be worthwhile. As shown in Fig. \ref{fig:tvb}, our \texttt{GAS} algorithm consistently delivers results more efficiently than \texttt{BASE+} across all datasets.


\begin{figure}[t]
    \centering
    \subfigure[Vary $|E|$: time]{\includegraphics[width=0.43\linewidth]{ 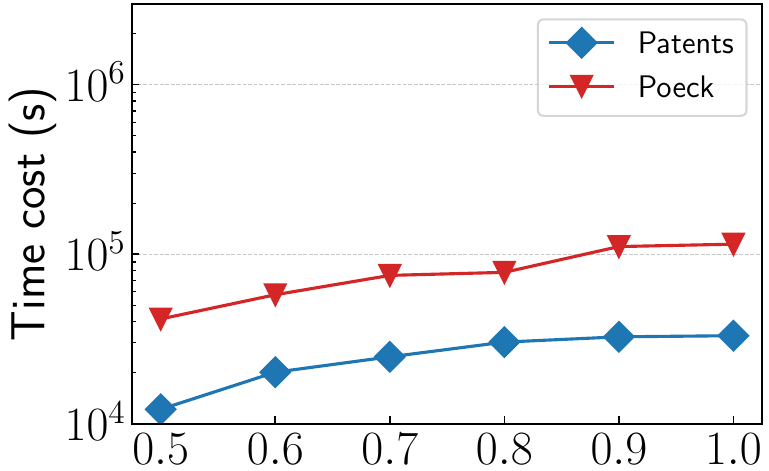}}
    \hspace{5mm}
    \subfigure[vary $|E|$: vertex ratio]{\includegraphics[width=0.43\linewidth]{ 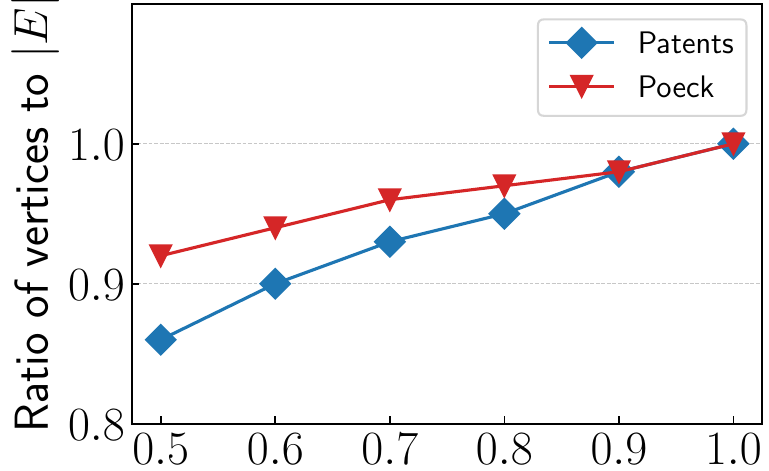}}
    \subfigure[Vary $|V|$: time]{\includegraphics[width=0.43\linewidth]{ 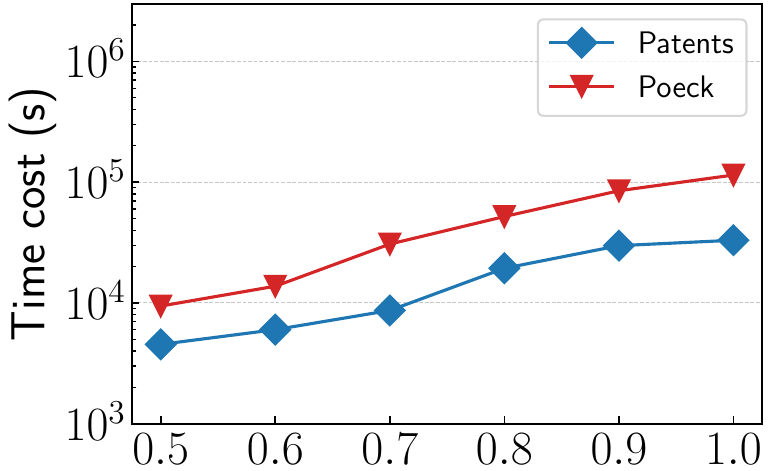}}
    \hspace{5mm}
    \subfigure[Vary $|V|$: edge ratio]{\includegraphics[width=0.43\linewidth]{ 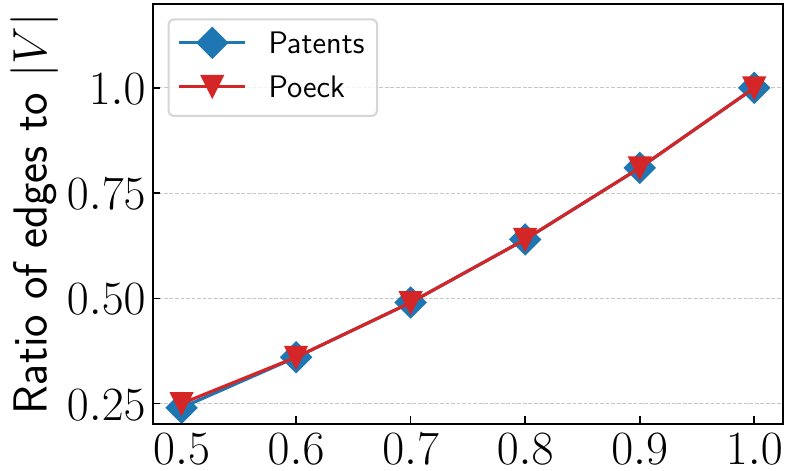}}
    \caption{Scalability evaluation by varying $|E|$ and $|V|$}
    \label{fig:sca}
\end{figure}

\myparagraph{Exp-6: Scalability evaluation by varying $|E|$ and $|V|$} In this experiment, we evaluate the scalability of \texttt{GAS} on two largest datasets, Patents and Pokec. The results are presented in Fig. \ref{fig:sca}.
To assess scalability, we randomly sample vertices and edges at rates between 50\% and 100\%, varying the number of vertices ($|V|$) and edges ($|E|$).
For vertex sampling, we obtain subgraphs induced by the selected vertices. 
The runtime of \texttt{GAS} under different sampling rates is shown in Figs. \ref{fig:sca}(a) and \ref{fig:sca}(c), while Figs. \ref{fig:sca}(b) and \ref{fig:sca}(d) depict the vertex and edge ratios for the corresponding sampling scenarios.
The results demonstrate that the runtime of \texttt{GAS} scales smoothly as the number of vertices and edges grows.


\begin{table}[t]
    \centering
    \footnotesize
    \caption{Upward route size evaluation}\label{table:route test}
    \renewcommand\arraystretch{1}
    \setlength{\tabcolsep}{1.1mm}
    \setlength{\extrarowheight}{2pt}
    \begin{tabular}{|l|c|c|c|c|}
    \hline
    \textbf{Datasets} &\textbf{Minimal size} & \textbf{Maximal size}&	\textbf{Sum size}&	\textbf{Average size}
    \\ \hline\hline
    College&	0&	60	& 32,314 &	2.34  \\ \hline
    Facebook&	0	&8,629	&1,478,230 &	14.55
    \\ \hline
    Brightkite	& 0	& 1,291	& 551,448 &	2.58
    \\ \hline
    Gowalla&	0&	633&	3,451,244	&  3.63
    \\ \hline
    Youtube	&0	& 1,555 &	5,533,322 &	1.85
    \\ \hline
    Google&	0	& 273	& 4,829,848	&1.12
    \\ \hline
    Patents	& 0 &	2,297	& 10,472,823	&0.63
    \\ \hline
    Pokec	& 0 &	971	& 64,276,694	&2.88
    \\ \hline
    \end{tabular} 
\end{table}

\begin{figure}[t]
    \centering
    \subfigure[Facebook]{\includegraphics[width=0.44\linewidth]{ 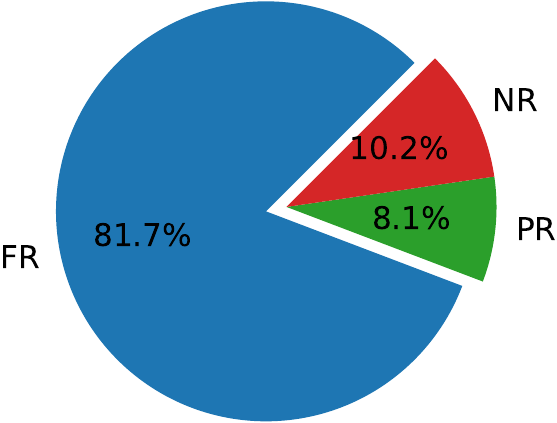}}
    \hspace{3mm}
    \subfigure[Gowalla]{\includegraphics[width=0.44\linewidth]{ 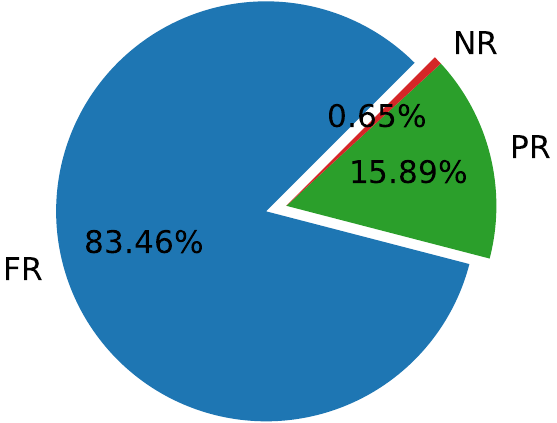}}
    \caption{Reuse test}
    \label{fig:RT}
\end{figure}

\myparagraph{Exp-7: Upward-route size evaluation} In this experiment, we evaluate the size of the upward-route for each edge during the first round of \texttt{GAS}.
Table \ref{table:route test} reports the route sizes for each dataset.
The results indicate that even the maximal route size constitutes only a small fraction of the entire graph.
The ``sum size'' refers to the total size of all upward-routes when each edge is considered as an anchor, which is at most 14 times the edge count ($|E|$) on Facebook. 
The "average size" is defined as the quotient of the sum size and $|E|$. 
With the upward-route optimization, only a limited number of edges need to be visited to compute the followers, thereby enabling the \texttt{BASE} algorithm to return results efficiently.

\myparagraph{Exp-8: Result reuse test} 
To evaluate the proportion of reusable results, we analyze the results computed in the first round of \texttt{GAS} that could be reused in subsequent rounds.
The reusable results are categorized into three groups: fully reusable (FR), partially reusable (PR), and non-reusable (NR).
Fully reusable results remain completely unchanged in the next round, while partially reusable results require recomputation only for the non-reusable tree nodes.
Non-reusable nodes, on the other hand, require complete re-computation.
As shown in Fig. \ref{fig:RT}, over 80\% of the results are fully reusable. This allows us to re-compute only the remaining results to identify the best anchor in the next round, significantly reducing the computation time.

\begin{table}[t]
    \centering
    \footnotesize
    \caption{Trussness gain, \texttt{AKT} v.s. \texttt{GAS}}\label{table:avg}
    \renewcommand\arraystretch{1.1}
    \setlength{\tabcolsep}{1.5mm}
    \setlength{\extrarowheight}{2pt}
    \begin{tabular}{|l|c|c|c|c|c|c|c|c|}
    \hline
    \textbf{Datasets} &\textbf{Col.} & \textbf{Fac.}&	\textbf{Bri.}&	\textbf{Gow.} &\textbf{You.} & \textbf{Goo.}&	\textbf{Pat.}&	\textbf{Pok.}
    \\ \hline\hline
    $avg_{gain}$&	51\% &	5\%	& 15\% & 20\%&	25\%& 27\%	& 25\% & 26\% \\ \hline
    $max_{gain}$&	74\% &  8\%	& 23\% & 31\% &	42\%& 35\%	& 47\% & 47\%
    \\ \hline
    \end{tabular} 
\end{table}

\begin{figure}[t]
    \centering
    \subfigure[\texttt{AKT}]{\includegraphics[width=0.45\linewidth]{ 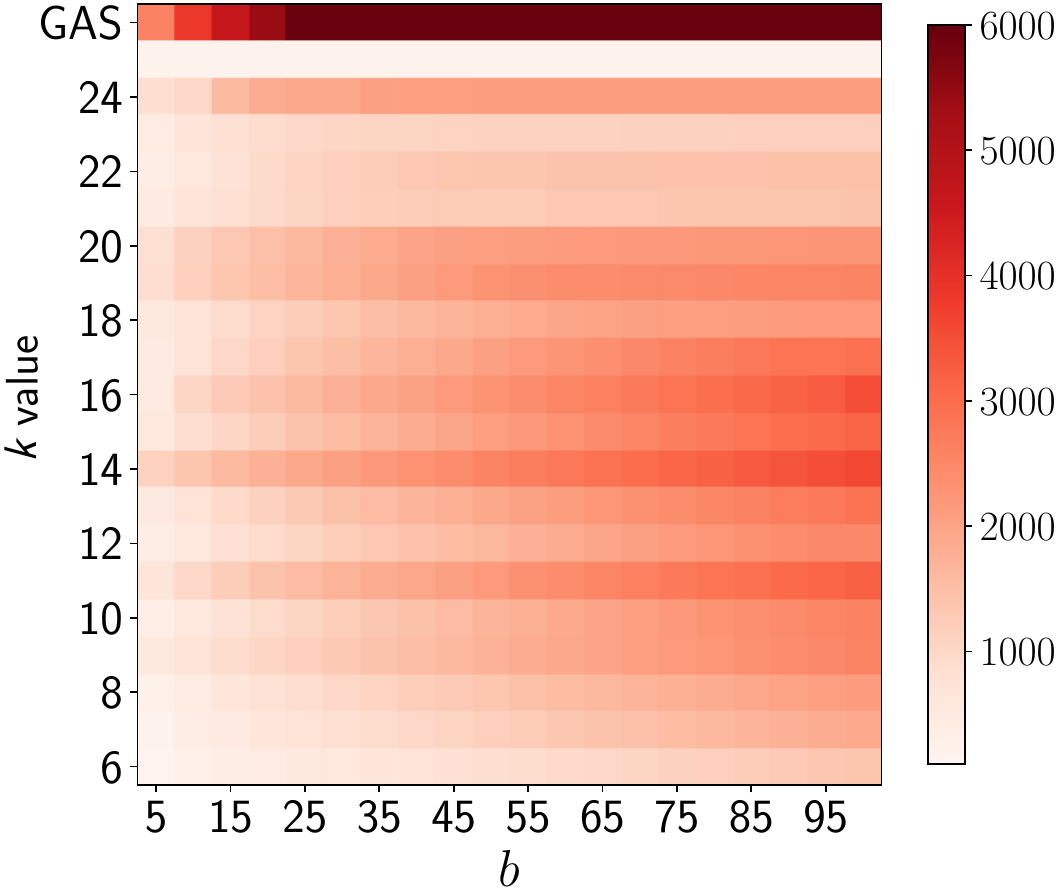}}
    \hspace{3mm}
    \subfigure[\texttt{GAS}]{\includegraphics[width=0.45\linewidth]
    { 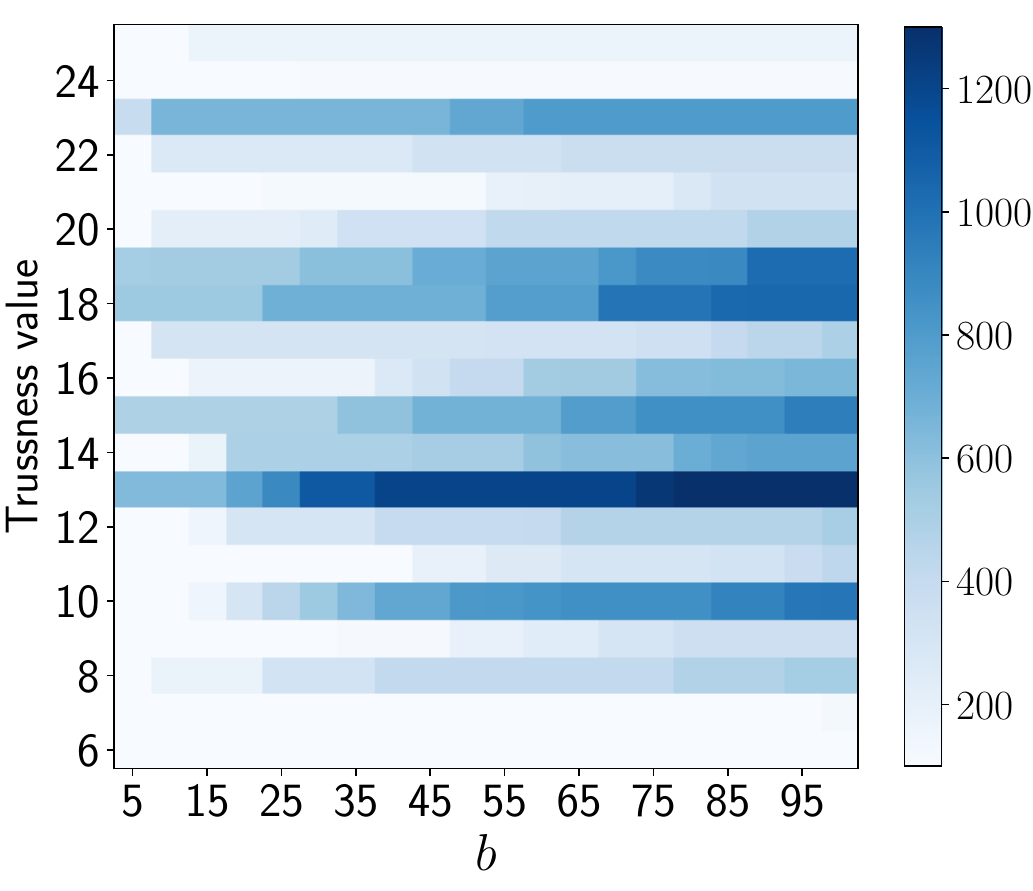}}
    \caption{Trussness gain distribution on Gowalla}
    \label{fig:dt}
\end{figure}

\myparagraph{Exp-9: Comparison with \texttt{AKT}}
In this experiment, we conduct a detailed comparison between \texttt{GAS} and \texttt{AKT} \cite{zhangfanefficiently2018}. TABLE \ref{table:avg} presents the trussness gain ratio of \texttt{AKT} to \texttt{GAS} when $b=50$ on all datasets. $max_{gain}$ represents the maximum trussness gain achieved by \texttt{AKT} for all possible $k$, while $avg_{gain}$ denotes the average trussness gain over all $k$ values. The results indicate that even at the optimal $k$ value, \texttt{AKT} achieves only 8\% to 72\% of the trussness gain obtained by \texttt{GAS}.
Furthermore, Fig. \ref{fig:dt}(a) illustrates a detailed comparison between \texttt{AKT} and \texttt{GAS} on dataset Gowalla. In the heatmap, each grid represents the trussness gain achieved by \texttt{AKT} for a given $k$ and $b$, with different color intensities indicating varying levels of gain. For better comparison, we overlay the trussness gain of \texttt{GAS} at the top, showing its performance under different budget values. The results clearly demonstrate that \texttt{GAS} consistently outperforms \texttt{AKT} across all parameter settings, achieving significantly higher trussness gains.
Additionally, Fig. \ref{fig:dt}(b) visualizes the distribution of \texttt{GAS}'s followers across different trussness levels. Each grid in the heatmap represents the number of followers at a given trussness level for a specific budget. The figure reveals that \texttt{GAS}'s followers span a wide range of trussness values, highlighting the advantage of our global optimization strategy, which effectively enhances trussness across the entire graph rather than being constrained to a specific $k$-truss subgraph.

\section{Related works}

\label{sec:rel}

There are many cohesive models studied in different scenarios, such as clique \cite{coenfindingclique,cheng2011finding,sun2020discovering,sun2023clique,chen2021maximum}, quasi-clique \cite{gao2016detecting}, $k$-core \cite{franDistance,chen2021edge,sun2020stable,HongboCritical}, $k$-ECC \cite{Lijun-k-edge,jiafengQuerying}, $k$-truss \cite{cohen2008trusses,jiawangTrussdecomposition}.
Among these models, $k$-truss and $k$-core are two widely used models, since both $k$-truss and $k$-core can be computed within polynomial time complexity  \cite{jiawangTrussdecomposition,franDistance,Vladimir2003An}. 
Cohen \etal \cite{cohen2008trusses} introduced the $k$-truss model, which defines the maximal subgraph where each edge is contained in at least $k-2$ triangles.
The $k$-truss model has various applications, such as community search
\cite{huang2014querying,alifirmtruss2022,VLDB-eff-community-search,Esratruss2017,liuqingtrussbesed,xiexiaoqinefficient}, P2P networks \cite{zitansunadaptive}, Urban and transportation Networks \cite{BuOnimproving}. 
Several algorithms were proposed to efficiently compute the $k$-truss of a graph. The most notable early work by Cohen \cite{cohen2008trusses} proposed a straightforward algorithm based on triangle enumeration. 
Wang \etal \cite{jiawangTrussdecomposition} proposed a efficient $O(m^{1.5})$ algorithm to solve the problem.
Behrouz \etal \cite{alifirmtruss2022} extended $k$-truss to multilayer graphs.
Furthermore, $k$-truss is often used as a powerful tool to measure user engagement or relationship importance. 
Thus, the truss maximization and minimization problem was widely studied, and such problems can be done in different ways.
Zhang \etal \cite{zhangfanefficiently2018} introduced the \texttt{AKT} algorithm by vertices anchoring to maximize $k$-truss vertices size, enhancing user engagement and tie Strength. 
Bu \etal \cite{BuOnimproving} developed an efficient algorithm to maximize $k$-truss by wisely choosing vertices to merge.  
And Sun \etal \cite{XinSunbudget2021,zitansunadaptive}
proposed a component-based algorithm and minimum approach to efficiently enlarge $k$-truss by adding edges. 
Zhu \etal \cite{DBLP:conf/ijcai/ZhuZCWZL19} tried to minimize $k$-truss by wisely choosing deleting edges.
What's more, the concept of $k$-truss has been extended to various graph types, including directed graphs \cite{liuqingtrussbesed}, weighted graphs \cite{zhengzibinweightedtruss}, signed graphs \cite{wu2020maximum,zhaojunCommunity}, multilayer graphs \cite{alifirmtruss2022,huangtrussdecommultillayer}. 
Due to the popularity of the $k$-truss, truss maintenance has been strongly required. 
Several studies \cite{huang2014querying,efficientSunzitian,QiluoExploring2023,Qiluobatch2020,ikaiUn2019} proposed some efficient maintenance algorithms to efficiently update trussness when the graph changes.




\section{Conclusion}
\label{sec:conc}
In this paper, we present the ATR problem, which seeks to select a set of $b$ edges from a graph as anchor edges in order to maximize the overall trussness gain of the network.
We also prove that this problem is NP-hard.
To address this challenge, we propose a greedy framework. 
To efficiently narrow down the search space of followers, we introduce the concept of upward routes and proved that only edges along the upward route may increase their trussness. Combined with a support check process, this approach enables fast and precise identification of followers. 
Additionally, to avoid redundant computations of unchanged results from previous rounds, we propose a followers classification tree, which effectively classifies followers. After selecting an anchor, the algorithm only processes tree nodes with structural changes and reuses previously computed results wherever applicable.
Finally, we performe comprehensive experiments on 8 datasets to assess the effectiveness and efficiency of our approach.

\noindent \textbf{Acknowledgments}.
This work was supported by UoW R6288 and ARC DP240101322, DP230101445.
\bibliographystyle{IEEEtran}  

\bibliography{paper}

\clearpage

\end{document}